\newcommand{ \ii}{\,\mathrm{i}\,}
\newcommand{\expo}[1]{\,\mathrm{e}^{#1}\,} 
\newcommand{\B}{\mathcal{B}}
\newcommand{\PE}{\widetilde{P}_m^\varepsilon}
\newcommand{\LL}{\mathfrak{L}}
\newcommand{\Ran}{{\rm Rand}}
\newcommand{\G}{\mathcal{G}}
\newcommand{\h}{\mathfrak{H}}
\newcommand{\R}{\mathbb{R}}
\newcommand{\N}{\mathbb{N}}
\newcommand{\C}{\mathbb{C}} 
\newcommand{\Z}{\mathbb{Z}}
\newcommand{\T}{\mathbb{T}}
\newcommand{\pr}{\mathbb{P}}
\newcommand{\Aff}{\mathtt{AFF}}
\newcommand{\A}{\mathcal{A}}
\newcommand{\Tp}{\mathscr{T}_\pr}
\newcommand{\ef}{\epsilon_F}
\begin{document}

\title*{Topological polarization in disordered systems}
\author{Giuseppe De Nittis and Danilo Polo Ojito}
\institute{Giuseppe De Nittis \at Pontificia Universidad Catolica de Chile, Santiago de Chile-Chile, \email{gidenittis@mat.uc.cl}
\and Danilo Polo Ojito \at Pontificia Universidad Catolica de Chile, Santiago de Chile-Chile \email{djpolo@mat.uc.cl}}
%
%
\maketitle

\abstract*{Deformations in piezoelectric materials lead to conduction effects, which are due to two contributions: the relative displacements of the ionic cores,  and the so-called orbital polarization.
This work is devoted to the rigorous derivation of the celebrated
King-Smith and Vanderbilt formula for orbital polarization in a generalized setting that includes continuous random systems among others.
}

\abstract{Deformations in piezoelectric materials lead to conduction effects, which are due to two contributions: the relative displacements of the ionic cores,  and the so-called orbital polarization.
This work is devoted to the rigorous derivation of the celebrated
King-Smith and Vanderbilt formula for orbital polarization in a generalized setting that includes continuous random systems among others.}


\section{Introduction}

In nature, there are materials in which a \emph{macroscopic polarization} at the edges of the sample appears when subjected to mechanical strains, \emph{i.e.}, to the accumulation of charge whenever the materials are deformed. This phenomenon is known as \emph{piezoelectric effect}, and its microscopically description was only understood in the last 50 years. In the $70'$s, Martin \cite{Mar} noticed that the previous approach in terms of dipole momenta for the macroscopic polarization was incomplete and unsatisfactory. This fact is due to that the  total polarization comes from two contributions: the \emph{relative displacements} of the ionic cores in a unit cell (whose computation is straightforward), and electrical conduction which is called \emph{orbital polarization}.
Resta \cite{Res1} and King-Smith and Vanderbilt \cite{KSV}  shifted the attention to the orbital polarization and derived a formula using linear response theory, which allows calculating the polarization in terms of the \emph{Berry connection}. Namely, the change in polarization $\Delta \mathscr{P}$ accumulated during a (periodic) deformation in the interval $[0,T]$ is given by
\begin{equation}\label{KSV}
    \Delta \mathscr{P}\;:=\;\frac{1}{(2\pi)^d}\sum_{m=0}^M\int_{\mathbb{B}}{\rm d}k\big(\mathscr{A}_m(k,T)-\mathscr{A}_m(k,0)\big).
\end{equation}
Here $\mathbb{B}\simeq\mathbb{T}^d$ denotes the first Brillouin zone, $d$ is the space dimension, $\mathscr{A}_m(k,t)$ is the Berry connection for the $m$-th Bloch band at time $t$, and the sum runs over all the occupied $M$ Bloch bands.
Panati, Sparber, and Teufel \cite{Pan} generalized equation \eqref{KSV} for continuous and periodic systems, showing that in the adiabatic limit of slow deformations the macroscopic piezoelectric current is determined by the geometry of the Bloch bundle. Using an adaption of Nenciu’s super-adiabatic theory \cite{Nen} to $C^*$-dynamical systems, Schulz-Baldes and Teufel \cite{Shulz} established  formula \eqref{KSV} for discrete random systems. They obtained that in the adiabatic limit it holds true  that
\begin{equation}\label{KSV2}
    \Delta\mathscr{P}_k\;=\;\ii\int_0^T {\rm d}t\mathscr{T}\big( P(t) [\, \partial_t P(t), \,\nabla_k P(t) \,]\big) +\mathcal{O}(\varepsilon^N ),
\end{equation}
where $\mathscr{T}$ denotes the trace per unit volume, $P(t)=\chi_{(-\infty,\ef)}\big(H(t)\big)$ is the spectral projection onto all states below the \emph{Fermi energy} $\ef$, $H(t)$ is the instantaneous
Hamiltonian of the system at time $t$,  $N\in\N$ is related to the regularity of the map $t\mapsto H(t)$ and $k=1,\ldots,d$  indicates the direction of the polarization in the physical space.
It is important to point out that the works \cite{Pan} and \cite{Shulz} it is also explored the topological nature of orbital polarization. They proved that $\Delta \mathscr{P}$ is quantized up to a small error (in the \emph{adiabatic parameter} $\varepsilon$) whenever the slow deformation is periodic. The latter fact  is in agreement with the observation of Thouless \cite{Thou} in a more restricted context. In \cite{Deni2} one of the authors and Lein  carried out a topological study of the orbital polarization in discrete graphene-like systems, where they showed that the polarization depends only on the class of homotopy paths in the gapped parameter space. Therefore, a necessary condition for the existence of piezoelectric effects is that the fundamental group of the gapped parameter space is non-trivial.

\medskip

In this work, we focus on deriving rigorously  equation \eqref{KSV2} for continuous and disordered systems of independent electrons in the regime of an adiabatic periodic deformation of the background potential. The main strategy is to use the mathematical framework introduced in \cite{De Nittis}, along with tools from (super)adiabatic theory
\cite{Nen,Teu}, for the derivation of the formula for $\Delta \mathscr{P}$ in a wide range of \emph{covariant} (random) systems,  which in principle are defined over a topological group $\G$ that can be chosen equal to $\R^d$ (continuous case) or  $\Z^d$ (discrete case) in concrete applications.
Our main result, Theorem \ref{polarizacio},   establishes the expression for the orbital polarization in this generalized setting, along  with its main topological consequences  when the deformation is periodic. 

\medskip
\noindent
{\bf Organization of the paper.} Section \ref{sect:rand} is devoted to the construction of the semi-finite von Neumann algebra of observables and its trace per unit volume. In Section \ref{sec:hip}, we briefly review all the necessary mathematical notions and we state the main hypotheses needed for the derivation of equation \eqref{KSV2}.
In Section \ref{KS-for} we present the main results. We start this section with a notion of differentiability for affiliated self-adjoint operators to the observable algebra, and after that, we prove
an equivalence for the current expectation value (Theorem \ref{Expresion}). We will use the later facts to derive the {\em King-Smith and Vanderbilt formula}. We finish this section with the topological quantization of the polarization for periodic deformations.
Section \ref{apll} provides the physical models where our results apply. We will present in detail the case of continuous disordered systems and we will build the Landau Hamiltonian which fulfills all the required hypotheses.
In order to maintain the clarity in the proof of the Theorem \ref{polarizacio}, in Appendix \ref{append} we have included the technical proofs needed for the construction of the superadiabatic projections.


\section{Description of the physical models}\label{sect:rand} 
The background material contained in this section is based on \cite[Chapter 4]{De Nittis} where the relevant references are also provided.

\medskip

Let $\mathfrak{h}$ be a (separable) Hilbert space  and  $\mathcal{B(\mathfrak{h})}$  the set of linear bounded operators on $\mathfrak{h}$. The physical relevant \emph{observables} (like the Hamiltonians) will be modeled by  strongly continuous\footnote{In the sense of the resolvent.} families $\big(H_\omega)_{\omega\in \Omega}$ of (self-adjoint) operators affiliated to a von Neumann algebra $\A$. Here  $\Omega$ denotes a compact\footnote{We will assume that $\Omega$ is also metrizable, and in turn separable. This assumption implies that $L^2(\Omega)$ is a separable Hilbert space.}   space which describes the possible configurations
of the \emph{interacting potential} between particles and medium (e.g. random interaction). In order to construct a von Neumann algebra $\A$ which contains homogeneous models\footnote{In the sense of Bellissard \cite{Bel}.}  we assume that there is an \emph{ergodic topological dynamical system} $(\Omega, \G,\tau, \pr)$ consisting of a locally compact\footnote{In the interesting examples $\G$ is also separable and metrizable (e.g. $\G=\R^d,\Z^d,\T^d$) and this implies that $L^2(\G)$ is a separable Hilbert space.} abelian group $\G$ (with a given Haar measure $\mu_\G$), a probability space $(\Omega,\mathcal{F}, \pr)$, where $\mathcal{F}$ is the Borel $\sigma$-algebra and $\pr$ is a probability measure, and a representation $\tau:\G\rightarrow {\rm Homeo}(\Omega)$. These structures are related by the following assumptions:
\begin{enumerate}[(i)]
    \item The group action $\G\times\Omega\ni(g,\omega)\mapsto \tau_g(\omega)\in\Omega$ is jointly continuous;
    \smallskip
\item $\pr$ is a $\tau$-invariant ergodic measure, namely $\pr(\tau_g(B))=\pr(B)$ for all $B\in \mathcal{F}$, and if $\tau_g(B)=B$ for all $g\in \G$ then $\pr(B)=1$ or $\pr(B)=0$.
\end{enumerate}

\medskip

In the next we will consider the Hilbert space
$\mathfrak{h}=L^2(\G)\otimes \C^N,$
where $N$ depends on the spin-type degrees of freedom (e.g. the isospin) and we will introduce the direct integral \cite[Part II, Chapters 1-5]{Dix1}
$$
{\h}\;:=\;\int_\Omega^{\oplus}{\rm d}\;\pr(\omega) \;\mathfrak{h}_\omega\;\simeq\; L^2(\Omega,\mathfrak{h})\;,
$$
with the assumption that $\mathfrak{h}_\omega=\mathfrak{h}$ for (almost) all $\omega\in\Omega.$ A random operator is a bounded-operator valued map $\Omega \ni \omega\mapsto A_\omega\in \B(\mathfrak{h})$ such that
the map $\Omega \ni\omega\mapsto \langle \phi,A_\omega\psi\rangle_\mathfrak{h}$ is measurable for all $\phi,\psi\in \mathfrak{h}$,
and ${\rm ess}-\sup\|A_\omega\|_{\B(\h)}<\infty$. We will denote the set of random operators by $\Ran({\h})\subset \B({\h})$. Furthermore, any random operator ${A}:=\{A_\omega\}_{\omega\in\Omega}$ fulfills 
$$\|{A}\|_{\B({\h})}\;=\;{\rm ess}-\sup_{\omega\in\Omega}\|A_\omega\|_{\B(\mathfrak{h})}\;.$$

\medskip

Let $\Theta\colon\G\times\G\to \mathbb{U}(1)$ be a \emph{twisting group $2$-cocycle} \cite[Definition 4.1.2]{De Nittis},  and
for every $g\in\G$ consider the operator ${U}_g\in{\B({\h})}$ defined by
 $$ \big({U}_g\ {\psi}\big)_{\tau_g(\omega)}(h)\;:=\;\Theta(g,hg^{-1})\;\psi_\omega(hg^{-1})\;,\qquad \forall\; h\in\G$$
 where ${\psi}:=\{\psi_\omega\}_{\omega\in\Omega}$ is any element of ${\h}$, and on the left-hand side the symbol $(\cdot)_{\tau_g(\omega)}$  means that the value of the vector ${U}_g{\psi}$ on the fiber of $\h$ at $\tau_g(\omega)$. It is evident from the definition that ${U}_g$ doesn't respect the fiber structure of the direct integral ${\h}$.
   One can check that the mapping $ \G\ni g\mapsto  {U}_g\in {\B( {\h})}$ forms a projective unitary representation of $\G$.
\begin{definition}
The von Neumann algebra of observables is the set
$$\A\;\equiv\;\A(\Omega,\pr,\G,\Theta)\:=\; {\rm Span}_\G\{ {U}_g\}'\;\cap\; \Ran( {\h})
$$
where ${\rm Span}_\G\{ {U}_g\}$ denotes the linear space generated by the $ {U}_g$ and the symbol $'$ denotes the commutant.
\end{definition}
For sake of notational simplicity, we write $\A$ instead of $\A(\Omega,\pr,\G,\Theta)$. 
Said differently  $\A$ consists of those random operators ${A}$ which are covariant with respect to the projective unitary representation of $\G$ provided by the  ${U}_g$, i.e.
$$
U_{g,\tau_g(\omega)}\;A_\omega\;U_{g,\tau_g(\omega)}^{-1}\;=\;A_{\tau_g(\omega)}\;,\qquad \forall\; g\in\G\;,\;\; \forall\; \omega\in\Omega
$$
where $U_{g,\tau_g(\omega)}$ denotes the action of ${U}_g$ from the fiber at $\omega$ into the fiber at $\tau_g(\omega)$.

\medskip

It is known that $\A$ is a semi-finite von Neumman algebra, hence $\A$ admits a faithful normal semi-finite (f.n.s.) trace \cite[Part I, Chapter 6, Proposition 9]{Dix1}. On the domain of definition, such a trace can be constructed following the procedure described in \cite[Proposition 2.1.6 and Theorem 2.2.2]{Len}, i.e.
$$
\mathscr{T}_\pr({A})\;:=\;\int_\Omega {\rm d}\pr(\omega) \,{\rm Tr}_{\mathfrak{h}}(M_\lambda A_\omega M_\lambda)\;,\qquad {A}\in\A^+\;,
$$
where $\lambda\in L^\infty(\G)\cap L^2(\G)$ is any positive  function 
of unitary norm $\|\lambda\|_{L^2}=1$,
and ${M}_\lambda$ is the  operator  which acts on $\mathfrak{h}$ as the multiplication by the diagonal matrix $\lambda\otimes{\bf 1}_N$.
It turns out that  $\Tp$ coincides with the \emph{trace per unit volume}, namely
$$\mathscr{T}_\pr({A})\;=\;\lim_{n\rightarrow\infty}\frac{1}{|\Lambda_n|}\,{\rm Tr}_{\h}(P_{\Lambda_n}A_\omega P_{\Lambda_n})\;,\qquad \pr-{\rm a. e.}
$$
 where $P_{\Lambda_m}$ is the multiplication operator by the characteristic function of the compact set $\Lambda_n\subset\G$, $|\Lambda_n|$ is its volume, and $\{\Lambda_n\}_{n\in\N}$ forms a \emph{F{\o}lner exhausting sequence} for $\G$.

\section{Main hypotheses for a linear response theory}
\label{sec:hip}
In this section, we will briefly review all the necessary mathematical notions and we will state the main hypotheses  needed for a rigorous derivation of the  linear response theory as formulated in  \cite[Section 2]{De Nittis}.  These notions and hypotheses will be  used in the following sections of this work.

\medskip

Let $\Aff(\A)$ be the set of  \emph{closed} and \emph{densely defined} operators affiliated with $\A$ \cite[Section 3.1.2]{De Nittis}, and  $\LL^p(\A)$ the  \emph{$L^p$-spaces}  (or  $p$-Schatten classes) associate to the semi-finite von Neumman algebra $\A$ with its f.n.s. trace $\mathscr{T}_\pr$   \cite{Dix1,Nel,Ter, Yao} or \cite[Section 3.2]{De Nittis}. The non-commutative H\"older inequalities allow defining the commutators
$$
[A,B]_{(r)}\;:=\;AB-BA\;\in\;\LL^r(\A)\;,\qquad A\in \LL^p(\A)\;,\;\;B\in \LL^q(\A)\;,
$$
with $r^{-1}=p^{-1}+q^{-1}$.

\medskip

\noindent
\textbf{Hypothesis 1 (unperturbed dynamics).} Let $H\in \Aff(\A)$ be a (possibly unbounded) self-adjoint operator (or \emph{Hamiltonian})  which prescribes the \emph{unperturbed dynamics} of the system.  The affiliation of $H$ to $\A$ implies that the unperturbed dynamics
$$\alpha_t(A)\;:=\;\expo{-\ii tH }A\expo{\ii tH},\qquad\;t\in \R,\;A\in \A,$$
generated by $H$ is a strongly continuous one-parameter group of isometries on each Banach space $\LL^p(\A)$. The generator $\mathscr{L}_H^{(p)}$ of $\alpha_t$ on  $\LL^p(\A)$ has a core $\mathcal{D}_{H,p}$ where it acts as a generalized commutator
\cite[Proposition 5.1.3]{De Nittis}, i.e.
$$\mathscr{L}_H^{(p)}(A)\;=\;-\ii\big( HA-(HA^*)^*\big)\;=:\;-\ii[H,A]_{\dag}\;,\qquad A\in\mathcal{D}_{H,p}\;.
$$
we will refer to $\mathscr{L}_H^{(p)}$ as the \emph{$p$-Liouvillian}
of $H$. An (initial) \emph{equilibrium configuration} for $H$ is any positive element $\rho\in \A^+$ such that $\alpha_t(\rho)=\rho$ for every $t\in\R$. It will be called \emph{equilibrium state} if in addition $\mathscr{T}_\pr(\rho)=c<+\infty$ (and up to a multiplicative factor one can always impose the normalization condition $c=1$). 
For instance $\rho=f(H)$, with $f\in L^\infty(\R)$ and positive, is an equilibrium configuration.


\medskip
\noindent
 \textbf{Hypothesis 2. (spatial derivation)} Let   $\{ X_1,X_2,...,X_d\}$ be a set of (possibly unbounded) self-adjoint operators which are  \emph{$\Tp$-compatible} in the sense that for all $k=1,2,...,d$ and for all $s\in \R$ they satisfy
\begin{enumerate}[(i)]
    \item $\expo{\ii sX_k}A\expo{-\ii sX_k}\in \A$ for all $A\in \A$;
        \smallskip
    \item $\Tp(\expo{\ii sX_k}A\expo{-\ii sX_k})=\Tp(A)$ for all $A$ in the domain of $\Tp$;
        \smallskip
    \item $\expo{\ii sX_j}\expo{\ii sX_k}=\expo{\ii sX_k}\expo{\ii sX_j}$ for all $j,k=1,2,...,d$.
\end{enumerate}
 This assumption allows to introduce the spatial derivations on $\LL^p(\A)$ as generators of an $\R$-flow, i.e.
$$\nabla_k(A)\;:=\;\lim_{s\rightarrow 0}\frac{\expo{\ii sX_k}A\expo{-\ii sX_k}-A}{s}.$$
The $\nabla_k$ are densely defined closed operators on each $\LL^p(\A)$ with a common core where they act as commutators, i.e. $\nabla_k(A)=\ii[X_k,A]$ \cite[Section 3.4.1]{De Nittis}. The domain of the associated gradient $\nabla:=(\nabla_{1},...,\nabla_{d})$ is the (non-commutative) \emph{Sobolev space} \cite[Section 3.4.2]{De Nittis}
  $$\mathfrak{M}^{1,p}(\A)\;:=\;\{ A\in \LL^p(\A)\,|\, \nabla_k(A)\in \LL^p(\A), \,k=1,2,...,d\}.$$

\medskip
\noindent \textbf{Hypothesis 3 (current operator).} The self-adjoint Hamiltonian $H\in \Aff(\A)$ with dense domain $\mathcal{D}(H)$ and the set of $\Tp$-compatible generators $\{X_1,X_2,...,X_d\}$ with common localizing domain $\mathcal{D}_c\subset {\h}$  \cite[Remark 3.4.7]{De Nittis} meet the following assumptions:
\begin{enumerate}[(i)]
    \item The joint core $\mathcal{D}_c(H):=\mathcal{D}_c\cap \mathcal{D}(H)$ is a densely defined core for $H$, and $X_k[\mathcal{D}_c(H)]\subset \mathcal{D}_c(H)$ for all $k=1,\ldots,d$;
    \smallskip
    \item $H[\mathcal{D}_c(H)]\subset \mathcal{D}_c$ and the formal commutators
    \begin{equation}
        J_k\;:=\;-\ii(X_kH-HX_k)\;,\qquad k=1,\ldots,d
    \end{equation}
    are essentially self-adjoint on $\mathcal{D}_c(H)$, and therefore uniquely extend to self-adjoint operators denoted (with abuse of notation) by $J_k=\nabla_k(H)$. 
          \smallskip
    \item All the $J_k$ are infinitesimally $H$-bounded, i.e., for any $\delta>0$ there are constants $a>0$ and $\delta>b>0$ such that
    $$\|J_k\varphi\|_{\h}\leq a\|\varphi\|_{\h}+b\|H\varphi\|_\h,\,\,\,\,\,\,\varphi\in\mathcal{D}_c(H)$$
    for all $k=1,\ldots,d$.
      \smallskip
    \item $J_k\in \Aff(\A)$ for every $k=1,\ldots,d$.
\end{enumerate}
The vector-valued operator
    $$
    J\;:=\;\nabla(H)\;=\;(\nabla_1(H),\ldots,\nabla_d(H))
    $$
    will be called \emph{current operator}.

\medskip
\noindent
\textbf{Hypothesis 4 (perturbed dynamics).} Let $\R\supseteq I\ni t\mapsto H(t)\in \Aff(\A)$ be a path such that: 
\begin{enumerate}[(i)] 
 \item $H(0)=H$ and $\mathcal{D}(H(t))=\mathcal{D}(H)$ for every $t\in\R$;
 \medskip
    \item For every $t\in I$ the operator $H(t)$ meets the properties of Hypotheses 3 and therefore there exists the time-dependent current $J(t)\;=\;\nabla(H(t))$;
     \medskip
    \item There exists a unique
strongly jointly continuous map $I^2\ni(s,t)\mapsto U(s,t)\in \A$, called    
     \emph{unitary propagator}, which leaves invariant the domain $\mathcal{D}(H)$ and solves the differential equation
     $$
     \ii\partial_t \psi_s(t)\;=\;H(t)\psi_s(t)\;,\qquad \psi_s(s)=\psi_0\in \mathcal{D}(H)
     $$
     in the sense that $\psi_s(t)=U(t,s)\psi_0$.
\end{enumerate}
The unitary propagator verifies the properties $U(t,t)={\bf 1}$ 
and $U(t,s)U(s,r)=U(t,r)$ for every $t,s,r\in I$.
Suitable conditions for the existence of the unitary propagator are given in  \cite[Theorem 5.2.4]{De Nittis}. Since $U(t,s)\in \A$,   it can be used to define  dynamics on $\A$ and $\LL^p(\A)$  by
\begin{equation}
    \eta_{(t,s)}(A)\;:=\;U(t,s)AU(s,t)\;,\qquad (t,s)\in\R^2\;,\quad A\in \A\;\; \text{or}\;\;\LL^p(\A).
\end{equation}
These are isometries jointly strongly continuous in $t$ and $s$ on $ \A$ and in each $\LL^p(\A)$. Moreover, it turns out that the map $I\ni t\mapsto \eta_{(t,s)}(A)\in\LL^p(\A)$ is differentiable for every fixed $s$, and
$$
 \ii\partial_t \eta_{(t,s)}(A)\;=\;[H(t),\eta_{(t,s)}(A)]_{\dag}
$$
whenever $HA$ and $HA^*$ are in  $\LL^p(\A)$ \cite[Proposition 5.2.6]{De Nittis}.

\medskip
\noindent
 \textbf{Hypothesis 5 (gap condition).} Let $\sigma_*(t)\subset\sigma(H(t))$ be a subset of spectrum of $H(t)$ such that there exist continuous function $f_{\pm}:I\rightarrow \R$ defining  intervals  $G(t)=[f_-(t),f_+(t)]$ so that $\sigma_*(t)\subset G(t)$ and $$g\;:=\;\inf_{t\in I}{\rm dist}\big(G(t), \sigma (H(t))\setminus \sigma_*(t)\big)$$
 is strictly positive. 
We will denote by $P_*(t):=\chi_{\sigma_*(t)}(H(t))$  the spectral projection of $H(t)$ on the gapped spectral patch $\sigma_*(t).$

 \medskip
\noindent
 \textbf{Hypothesis 6 (regularity of the equilibrium state).} Let $\rho$ be an  equilibrium state for $H$. We assume that $\rho$ is $p$-regular, i.e.
\begin{enumerate}[(i)]
    \item $\rho\in \A^+\cap \mathfrak{M}^{1,1}(\A)\cap\mathfrak{M}^{1,p}(\A)$;
    \vspace{1mm}
    \item  $H(t)\rho(t)\in \mathfrak{M}^{1,1}(\A)\cap\mathfrak{M}^{1,p}(\A)$ for all $t\in I$.
\end{enumerate}
The state $\rho$ can be evolved also by the perturbed dynamics $\eta_{(t,s)}$ through the prescription
\begin{equation}
     \rho(t)\;:=\;\eta_{(t,0)}(\rho)\;=\;U(t,0)\rho U(0,t)\;,\qquad t\in \R.
\end{equation}
Since $\rho(t)^*=\rho(t)$ for every $t\in I$ it follows that the generalized commutator $[H(t),\rho(t)]_{\dag}$ is well defined and 
from \cite[Theorem 5.2.6]{De Nittis} one gets that  $\rho(t)$ is the unique solution of 
\begin{equation}\label{eq_eq_t}
   \left\{ \begin{array}{lcc}\;
             
 \ii\partial_t\rho(t)\;=\;[H(t),\rho(t)]_{\dag}\;,
\\\\
           \;\;\rho(0)=\rho\;,
             \end{array}
   \right.
\end{equation}
where the derivative is taken in $\LL^1(\A)$ or $\LL^p(\A)$.


 \section{The King-Smith and Vanderbilt formula for the orbital polarization}\label{KS-for}
In this section we present the main results of this paper, i.e., the derivation of the \emph{King-Smith and Vanderbilt formula} for the orbital polarization. \medskip

\noindent
Let us start by saying that a self-adjoint map  $\R\supseteq I\ni t\mapsto H(t)\in \Aff(\A)$ is $N$-differentiable in the uniform sense (in the interval $I$) if the map 
$$
I\;\ni\; t\longmapsto\;\big(\ii{\bf 1}-H(t)\big)^{-1}\;\in\;\A
$$ 
is $N$-differentiable with respect to the norm topology of $\A$.
We will denote with $C^N(I,\A)\subset\A$ the space of $\A$-valued maps which are  $N$-differentiable.

\begin{remark}\label{Rem 1}
Notice that if the map $I\ni t\mapsto H(t)\in\Aff(\A)$ is $N$-differentiable in the uniform sense, then it is also true that 
$$\big(z{\bf 1}-H(\cdot)\big)^{-1}\;\in\; C^N(I,\A
)$$
for each  $z\in\C$ which lies in the resolvent set of $H(t)$, for any $t\in I$. Indeed, one has that
$$\big(z{\bf 1}-H(t)\big)^{-1}-\big(\ii{\bf 1}-H(t)\big)^{-1}\;=\;(\ii-z)\big(z{\bf 1}-H(t)\big)^{-1}\big(\ii{\bf 1}-H(t)\big)^{-1}.$$
Thus,
$$\big(z{\bf 1}-H(t)\big)^{-1}\;=\;F(z,t)\big(\ii{\bf 1}-H(t)\big)^{-1}$$
where $$F(z,\cdot)\;:=\;\left(\,{\bf 1}-(\ii-z)\big(\ii{\bf 1}-H(\cdot)\big)^{-1}\,\right)^{-1}\;\in\; C^N(I,\A)\;.$$
Therefore
$$\partial_t^n\big(z{\bf 1}-H(t)\big)^{-1}\;=\;\partial_t^nF(z,t)\big(\ii{\bf 1}-H(t)\big)^{-1}+F(z,t)\partial^n_t\big(\ii{\bf 1}-H(t)\big)^{-1}$$
for $0<n\leq N$ in consequence of the fact that $(\ii-z)^{-1}$ lies in the resolvent of $\big(\ii{\bf 1}-H(t)\big)^{-1}$ for every $t\in I$, and of the identity
$$\partial_t^n F(z,t)=-(\ii-z)F(z,t)\partial_t^n\big(\ii{\bf 1}-H(t)\big)^{-1}F(z,t)\;,\hspace{1cm}0<n\leq N\;.$$
\end{remark}
Our first result is a generalization of \cite[Proposition 4]{Shulz}.

\begin{theorem}\label{Expresion}
 Let $\R\supseteq I\ni t\mapsto H(t)\in \Aff (\A)$ be a  path of self-adjoint operators  which meets Hypothesis 1, 2, 3, 4. Let $P\in\A$ be an orthogonal projection which satisfies Hypothesis 6 with $p=1,2$.
Let $P(t):=\eta_{(t,0)}(P)$ and $J_k(t)$ the $k$-th component of the current operator $J(t)=\nabla(H(t))\in \Aff (\A)$.
Then, the current expectation value can be rewritten as
 \begin{equation}\label{expre}
      \Tp\big(J_k(t)P(t)\big)\;=\;\ii\Tp\big(P(t)[\partial_tP(t),\nabla_k (P(t))]_{(1)}\big)
 \end{equation}
 for every $k=1,\ldots,d$.
\end{theorem}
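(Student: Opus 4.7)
The plan is to reduce both sides of \eqref{expre} to the common expression $-\Tp\bigl((H(t)P(t)+P(t)H(t))\nabla_k P(t)\bigr)$ by combining three algebraic inputs: (i) the integration-by-parts identity $\Tp\circ\nabla_k=0$ on $\mathfrak{M}^{1,1}(\A)$, which holds since $\nabla_k$ generates a $\Tp$-preserving $\R$-flow; (ii) the Leibniz rule for $\nabla_k$; and (iii) the projection identities $\nabla_k P=P\nabla_k P+\nabla_k P\cdot P$, $\dot P=P\dot P+\dot P P$, $P\nabla_k P\,P=0=P\dot P P$, all obtained by differentiating $P(t)^2=P(t)$ with respect to $X_k$ and to $t$. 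The bridge between $\dot P$ and $H$ is provided by the Heisenberg equation $\ii\partial_t P(t)=[H(t),P(t)]_\dag$ from Hypothesis~4 and \eqref{eq_eq_t}. The argument is pointwise in $t$, so I will suppress the time variable.

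For the \emph{left-hand side}, Hypothesis~6(ii) gives $HP\in\mathfrak{M}^{1,1}(\A)$, and the Leibniz rule yields $\nabla_k(HP)=J_k P+H\nabla_k P$ with both summands in $\LL^1(\A)$. Applying $\Tp$ and using $\Tp\circ\nabla_k=0$ gives $\Tp(J_k P)=-\Tp(H\nabla_k P)$. Splitting $\nabla_k P=P\nabla_k P+\nabla_k P\cdot P$ and using cyclicity on the second piece produces
\[
\Tp\bigl(J_k(t)P(t)\bigr)\;=\;-\Tp\bigl(HP\nabla_k P\bigr)-\Tp\bigl(PH\nabla_k P\bigr)\;.
\]

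For the \emph{right-hand side}, substitute $\ii\dot P=HP-PH$ into $\ii P[\dot P,\nabla_k P]$. One finds $\ii P\dot P\nabla_k P=PHP\nabla_k P-PH\nabla_k P$, while $\ii P\nabla_k P\,\dot P=P\nabla_k P\,HP$, since the identity $P\nabla_k P\,P=0$ kills the $PH$-piece. Now cyclicity together with $\nabla_k P\cdot P=\nabla_k P-P\nabla_k P$ forces
\[
\Tp(PHP\nabla_k P)\;=\;\Tp(HP\nabla_k P\cdot P)\;=\;\Tp(HP\nabla_k P)-\Tp(HP\nabla_k P)\;=\;0,
\]
and cyclicity combined with $P^2=P$ gives $\Tp(P\nabla_k P\,HP)=\Tp(HP\nabla_k P)$. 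Summing the three terms reproduces $-\Tp(HP\nabla_k P)-\Tp(PH\nabla_k P)$, matching the LHS.

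The main obstacle is making every algebraic step rigorous inside the $\LL^p$-machinery of $\A$ when $H$ is unbounded. Concretely, one has to: verify that $HP$, $PH$, $HP\nabla_k P$, $PH\nabla_k P$ all lie in $\LL^1(\A)$, which follows from Hypothesis~6 together with the non-commutative Hölder inequality; check that $[\dot P,\nabla_k P]_{(1)}\in\LL^1(\A)$ is legitimate, which requires $\dot P=-\ii[H,P]_\dag\in\LL^2(\A)$ and $\nabla_k P\in\LL^2(\A)$, explaining the hypothesis $p=2$; interpret the generalized commutator $[H,P]_\dag$ and the Leibniz identity $\nabla_k(HP)=J_k P+H\nabla_k P$ on the common cores supplied by Hypotheses~1 and~3 before passing to closures; and, finally, ensure that the propagated projection $P(t)=\eta_{(t,0)}(P)$ inherits the $1$- and $2$-regularity from $P$, which is guaranteed by the fact that $\eta_{(t,0)}$ is an isometry on each $\LL^p(\A)$ and by the invariance of the Sobolev spaces encoded in Hypothesis~6.
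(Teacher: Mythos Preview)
Your argument is correct and reaches the same identity, but the route differs from the paper's. The paper never writes $H\nabla_kP$ or invokes the Leibniz rule $\nabla_k(HP)=J_kP+H\nabla_kP$ directly; instead it regularizes with the spectral cutoffs $P_n=\chi_{[-n,n]}(H(t))$, so that every occurrence of $H$ appears in the bounded combination $HP_n\in\A$, then expands the right-hand side, uses $P\nabla_kP\,P=0$ and the integration-by-parts identity for $\Tp$ and $\nabla_k$ exactly as you do, and finally passes to the limit $n\to\infty$ to recover $\Tp(J_kP)$. Your approach is more streamlined: you reduce both sides to $-\Tp(HP\nabla_kP)-\Tp(PH\nabla_kP)$ without any approximation. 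The price is that the unbounded-operator bookkeeping is concentrated in the single step $\nabla_k(HP)=J_kP+H\nabla_kP$ in $\LL^1(\A)$, which you correctly flag; once $H\nabla_kP$ is \emph{defined} as $\nabla_k(HP)-J_kP\in\LL^1(\A)$ and shown to agree with the formal product on the core $\mathcal D_c(H)$ via Hypothesis~3, all subsequent cyclicity moves are legitimate because each factor lies in the right $\LL^p$-class. The paper's cutoff method trades this one delicate identification for several routine limit arguments; your method is algebraically cleaner but relies more visibly on the regularity package in Hypothesis~6.
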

\begin{proof}
From the hypothesis we have that $H(t)P(t)\in \LL^1(\A)$ and $J_k(t)\in \Aff(\A)$ for all $t\in I$. Then,
using \cite[Lemma 3.3.7]{De Nittis} one obtains that
$$J_k(t)P(t)\;=\;\big(J_k(t)(H(t)-z{\bf 1})^{-1}\big)\big((H(t)-z{\bf 1})P(t)\big)\in \LL^1(\A),$$
where $z$ (which can depend on $t$) lies in the resolvent set of $H(t)$.
Therefore, the left-hand side of the expressions \eqref{expre} is well defined. From the hypothesis, it also follows that 
$\nabla_k (P(t))\in \LL^2(\A)$ and $H(t)P(t)\in \LL^2(\A)$ which implies 
$$
\partial_tP(t)\;=\;-\ii\big(H(t)P(t)-(H(t)P(t))^*\big)\;\in\; \LL^2(\A)\;.
$$
Therefore, the commutator 
$[\partial_tP(t),\nabla_k (P(t))]_{(1)}$ is a well defined element in  $\LL^1(\A)$.
For sake of notational simplicity, we suppress the $t$ dependencies in the following computation. 
 From  \cite[Lemma 3.2.14]{De Nittis}, one gets
 \begin{equation*}
     \begin{split}
         \ii\Tp\big(P[\partial_tP,\nabla_k (P)]_{(1)}\big)&\;=\; \lim_{n\rightarrow \infty}\ii\Tp\big(P[\partial_tP P_n,\nabla_k (P)]_{(1)}\big)\\
         &=\;\lim_{n\rightarrow \infty}\ii\Tp\big(P[\partial_tP P_n,\nabla_k (P)]_{(1)}P\big)
     \end{split}
 \end{equation*}
where $P_n(t):=\chi_{[-n,n]}\big(H(t)\big)$  is the spectral projection of $H(t)$ on  $[-n,n]$. Moreover, one has that
 \begin{equation*}
     \begin{split}
        \ii \partial_tP P_n\;&=\;[H,P]_{\dag}P_n\;=\;HPP_n-(HP)^*P_n\\
        &=\;HPP_n-PHP_n\;=\;(HP-PH)P_n,
     \end{split}
 \end{equation*}
 since $(HP)^*=PH$ when projected on  $P_n$. Thus, beginning from the right-hand side of \eqref{expre} and using the properties of the trace one finds
\begin{equation*}
     \begin{split}
  &\ii\Tp\big(P[\partial_tP,\nabla_k (P)]_{(1)}\big)\\
  &=\lim_{n\rightarrow \infty}\Tp\big(P[(HP-PH)P_n,\nabla_k (P)]_{(1)}P\big)\\
  &=\lim_{n\rightarrow \infty}\Tp\big(P(HP-PH)P_n\nabla_k( P)P-P\nabla_k (P)(HP-PH)P_nP\big)\\
  &=\lim_{n\rightarrow \infty}\Tp\big(-PHP_n\nabla_k (P)P-P\nabla_k (P)HPP_nP\big)\\
      \end{split}
 \end{equation*}
 where in the last equality we have used the identity $P\nabla_k (P)P=0$ (which follows from $\nabla_k (P)=\nabla_k (P)^2$) to remove the term $PHPP_n\nabla_k (P) P$ which goes to $0$ when $n\rightarrow \infty$, and the term  $P\nabla_k (P) PHP_nP$. 
Since  $PHP_n\in \A$ and using the ``integration by part'' between $\Tp$ and $\nabla_k$ one gets
  \begin{equation*}
     \begin{split}
  &\ii\Tp\big(P[\partial_tP,\nabla_k( P)]_{(1)}\big)\\
      &=\lim_{n\rightarrow \infty}\Tp\big(\nabla_k(PHP_n)P\big)\;-\;\Tp\big(P\nabla_k (P)HPP_n\big)\\
      &=\;\lim_{n\rightarrow \infty}\Tp\big(\nabla_k (P)HP_nP+P\nabla_k (HP_n)P\big)-\Tp\big(P\nabla_k( P)HPP_n\big)\\
      &=\;\lim_{n\rightarrow \infty}\Tp\big(P\nabla_k (HP_n)P\big)+\lim_{n\rightarrow \infty}\Tp\big(\nabla_k( P)HP_nP-P\nabla_k( P)HPP_n\big)\\
      &=\;\lim_{n\rightarrow \infty}\Tp\big(\nabla_k (HP_n)P\big)+\lim_{n\rightarrow \infty}\Tp\big(\nabla_k( P)HP_nP-\nabla_k( P)HPP_nP\big)\\
      &=\;\Tp\big(\nabla_k( H)P\big)+0\\
      &=\;\Tp\big(J_kP\big).   \end{split}
 \end{equation*}
 This concludes the proof.
  \end{proof}
  
  \medskip
  
 Let $\rho_0:=\chi_{(-\infty,\ef]}(H)$ be the spectral projection of the Hamiltonian $H=H(0)$ with Fermi level $\ef\in\R$ in a gap of the spectrum of $H$. It is clear that $\rho_0$ is an initial equilibrium state for $H$. Let us assume that $\rho_0$ meets the regularity condition of Hypothesis 6 and let $\rho(t)$ be the solution of the equation \eqref{eq_eq_t}. The variation of the \emph{polarization} $\Delta \mathscr{P}_k$    between time $t=0$ and $t=T$
 due to the current $J_k$ in the state $\rho_0$
  is  by definition
 \begin{equation}\label{Polarizacion}
     \Delta\mathscr{P}_k\;: =\;
   \int_0^T{\rm d}t\;\Tp\big( J_k(t)\rho(t)\big)\;,\qquad k=1,...,d\;.
 \end{equation} 
 By using Theorem \ref{Expresion}
one can rewrite the quantity \eqref{Polarizacion} as follows
\begin{equation}\label{29}
    \Delta \mathscr{P}_k\;:=\;\ii \int_0^Tdt\,\Tp\big(\rho(t)[\partial_t\rho(t),\nabla_k(\rho(t))]_{(1)}\big)\;,\qquad k=1,2,\ldots,d\;.
\end{equation}
 It is important to point out that  equation \eqref{29} is not very useful in general, since it requires the  knowledge  of $\rho(t)$, which is not a function of $H(t)$. Thereby, we will use tools from \emph{adiabatic perturbation theory} adapted from \cite{Shulz}, in order to express the polarization in terms of the spectral projections of $H(t)$. For that, let us consider the Liouville equation
\begin{equation}\label{solution}\varepsilon\partial_t\rho(t)=-\ii[\,H(t),\,\rho(t)\,]_{\dag},\end{equation}
where $\varepsilon>0$ is a small adiabatic parameter. With these ingredients, we present now the main Theorem of this work, which is based on some technical results described in Appendix \ref{append}.
\begin{theorem}\label{polarizacio}
 Assume that the map $[0,T]\ni t\mapsto H(t)\in \Aff (\A)$ is $N+2$-differentiable in the uniform sense  and meets Hypothesis 1, 2, 3, 4, 5 and 6. If $\partial_t^n \big(\ii{\bf 1}-H(t)\big)^{-1}|_{t=0}=\partial_t^n \big(\ii{\bf 1}-H(t)\big)^{-1}|_{t=T}=0$ for all $0<n\leq N,$ then
\begin{equation}\label{eq_probo}
  \Delta\mathscr{P}_k\; =\;\ii\int_0^T {\rm d}t\;\Tp\big( P_* (t) [ \partial_t P_* (t),\, \nabla_k (P_* (t)) ]_{(1)}\big) +\mathcal{O}(\varepsilon^N )
\end{equation}
where $P_*(t) $ is the instantaneous spectral projection of $H(t)$
on the gapped spectral patch $\sigma_*(t)$.
\end{theorem}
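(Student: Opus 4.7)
The plan is to combine Theorem \ref{Expresion} with a super-adiabatic construction provided by Appendix \ref{append} in order to replace the true dynamical state $\rho(t)$ solving \eqref{solution} by a structural approximation expressible in terms of the instantaneous spectral projection $P_*(t)$. The starting point is equation \eqref{29}, which expresses $\Delta\mathscr{P}_k$ as an integral over $[0,T]$ of the trace of a triple product involving $\rho(t)$, $\partial_t\rho(t)$ and $\nabla_k\rho(t)$. Since $\rho(t)$ is not a function of $H(t)$, the objective is to transfer the formula from $\rho(t)$ onto a quantity built purely from $H(t)$ and its derivatives, incurring only an $\mathcal{O}(\varepsilon^N)$ error.

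First, I would introduce the $N$-th order super-adiabatic projection $\PE(t)\in\A$ furnished by the appendix. By its recursive construction, $\PE(t)=P_*(t)+\varepsilon\pi_1(t)+\ldots+\varepsilon^N\pi_N(t)$, where each $\pi_j(t)$ is a polynomial expression in $P_*(t)$, resolvents of $H(t)$, and the first $j$ time-derivatives of $H(t)$, and is well-defined in $\A$ thanks to the gap condition of Hypothesis~5 and the uniform $(N+2)$-differentiability of $H(\cdot)$. Its two defining properties are (i) the near-idempotency $\PE(t)^2-\PE(t)=\mathcal{O}(\varepsilon^{N+1})$ and (ii) the near-intertwining
$$
\ii\varepsilon\,\partial_t\PE(t)\;-\;[H(t),\PE(t)]_\dag\;=\;R_N^\varepsilon(t)\;,\qquad \|R_N^\varepsilon(t)\|_{\LL^1(\A)}\;=\;\mathcal{O}(\varepsilon^{N+1})\;.
$$
The endpoint assumption $\partial_t^n(\ii{\bf 1}-H(t))^{-1}|_{t=0,T}=0$ for $0<n\leq N$ is tailored to force $\pi_j(0)=\pi_j(T)=0$ for every $j\geq 1$, so that $\PE(0)=P_*(0)=\rho_0$ and $\PE(T)=P_*(T)$.

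Second, I would compare $\rho(t)$ with $\PE(t)$ in $\LL^1(\A)$. The difference $D^\varepsilon(t):=\rho(t)-\PE(t)$ satisfies $D^\varepsilon(0)=0$ together with
$$
\ii\varepsilon\,\partial_tD^\varepsilon(t)\;=\;[H(t),D^\varepsilon(t)]_\dag\;-\;R_N^\varepsilon(t)\;.
$$
A Duhamel / variation-of-constants formula on $\LL^1(\A)$, coupled with the fact that the adiabatic propagator acts by $\LL^1$-isometries (Hypothesis~4), yields
$$
\|D^\varepsilon(t)\|_{\LL^1(\A)}\;\leq\;\varepsilon^{-1}\int_0^t\|R_N^\varepsilon(s)\|_{\LL^1(\A)}\,{\rm d}s\;=\;\mathcal{O}(\varepsilon^N)\;.
$$
A parallel argument, using the $\Tp$-compatibility of $X_k$ in Hypothesis~2 and the control of $\nabla_k\PE$ in $\LL^2(\A)$, gives the analogous $\mathcal{O}(\varepsilon^N)$ bound for $\nabla_kD^\varepsilon(t)$. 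Substituting $\rho(t)=\PE(t)+D^\varepsilon(t)$ in \eqref{29} and invoking the non-commutative H\"older inequalities to estimate the resulting triple-product terms then produces
$$
\Delta\mathscr{P}_k\;=\;\ii\int_0^T\Tp\big(\PE(t)[\partial_t\PE(t),\nabla_k\PE(t)]_{(1)}\big)\,{\rm d}t\;+\;\mathcal{O}(\varepsilon^N)\;.
$$

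Third, I would expand $\PE=P_*+\sum_{j=1}^N\varepsilon^j\pi_j$ and collect powers of $\varepsilon$. The $\varepsilon^0$ term is precisely the right-hand side of \eqref{eq_probo}. For each intermediate coefficient of $\varepsilon^j$ with $1\leq j\leq N-1$, the plan is to use the near-commutativity $[H,\PE]_\dag=\ii\varepsilon\partial_t\PE-R_N^\varepsilon$, the identity $P_*\nabla_k(P_*)P_*=0$ (already exploited in the proof of Theorem \ref{Expresion}), and the cyclicity of $\Tp$ to recast the $t$-integrand as a total time derivative. Integration by parts in $t$ then converts each such contribution into a boundary evaluation at $t=0$ and $t=T$, which vanishes because $\pi_j(0)=\pi_j(T)=0$ for $j\geq 1$. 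This yields the desired formula \eqref{eq_probo}.

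The main obstacle is the last step: tracking the algebraic expansion of the triple product $\Tp(\PE[\partial_t\PE,\nabla_k\PE])$ order by order in $\varepsilon$, and showing that every coefficient strictly below order $\varepsilon^N$ collapses to a vanishing boundary term. This is precisely where the super-adiabatic construction, the endpoint condition on the derivatives of $H$, and the non-commutative differential calculus of Hypotheses~2 and~3 must mesh together; the continuous/covariant setting of \cite{De Nittis} requires handling all trace estimates through H\"older inequalities and spatial derivations rather than through the finite-volume manipulations available in \cite{Shulz}.
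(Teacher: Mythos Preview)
Your first two steps are essentially what the paper does, but the paper works throughout with the \emph{true} super-adiabatic projection $P_N^\varepsilon(t)$ obtained from your almost-projection $\PE(t)$ by holomorphic functional calculus (Theorem~\ref{Teo 3}), rather than with $\PE$ itself. Since the endpoint hypothesis forces $P_N^\varepsilon(0)=P_*(0)=\rho_0$, Corollary~\ref{Coro 3} gives exactly your Duhamel bound $\|\rho(t)-P_N^\varepsilon(t)\|_{\mathcal S,k}=\mathcal O(\varepsilon^N)$, and substitution into \eqref{29} yields
\[
\Delta\mathscr P_k\;=\;\ii\int_0^T \Tp\big(P_N^\varepsilon[\partial_tP_N^\varepsilon,\nabla_kP_N^\varepsilon]_{(1)}\big)\,{\rm d}t+\mathcal O(\varepsilon^N)\;.
\]

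The real divergence is in your third step. The paper does \emph{not} expand in powers of $\varepsilon$ and hunt for total time derivatives. Instead it shows that the integral above is \emph{exactly} independent of $\varepsilon$: one computes $\partial_\varepsilon$ of the integral, integrates by parts in $t$, and finds that everything cancels to zero. The cancellation rests on the identity $\Tp(\partial_\varepsilon P_N^\varepsilon\,\partial_t P_N^\varepsilon\,\nabla_k P_N^\varepsilon)=0$, which follows from $P_N^\varepsilon(\partial P_N^\varepsilon)P_N^\varepsilon=0$ and cyclicity---and that identity needs $P_N^\varepsilon$ to be a genuine projection. Once $\partial_\varepsilon I(\varepsilon)=0$, one simply sends $\varepsilon\to 0$ and recovers $P_*$.

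Your route through $\PE$ and an order-by-order expansion runs into the very obstacle you flag at the end: the tools you list (near-commutativity with $H$, the single relation $P_*\nabla_k(P_*)P_*=0$, cyclicity) are not enough to recognise each $\varepsilon^j$-coefficient, $1\le j\le N-1$, as an exact $t$-derivative. The triple product $\Tp(\PE[\partial_t\PE,\nabla_k\PE])$ contains no $H$, so the intertwining relation does not enter directly; and the off-diagonality identity applies to $P_*$ but not to the correction terms $\pi_j$. What actually makes the intermediate orders vanish is precisely the diffeotopy (Chern--Simons--type) invariance exploited in the paper, and that invariance is a statement about families of \emph{projections}. Passing from $\PE$ to $P_N^\varepsilon$ via Theorem~\ref{Teo 3} is therefore not a cosmetic step but the device that turns your ``main obstacle'' into a two-line computation.
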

\begin{proof}
From Theorem \ref{Teo 3}, there are projections $P_N^\varepsilon$ such that
$$\|P_N^\varepsilon(t)-\rho(t)\|+\|\nabla_k\big(P_N^\varepsilon(t)-\rho(t)\big)\|+\|\partial_t\big(P_N^\varepsilon(t)-\rho(t)\big)\big\|=\mathcal{O}(\varepsilon^N)\;.$$
Since $\rho$ is a $p$-regular initial equilibrium state, then  for some $\varepsilon>0$ small enough $\nabla_k H(t) P_N^\varepsilon(t)\in \LL^1(\A)$.  
Furthermore, the equation \eqref{29}, the norm bound property of the trace, and Corollary \ref{Coro 3} yield
$$\Delta \mathscr{P}_k\;=\;\ii\int_0^T {\rm d}t\;\Tp\big( P_N^\varepsilon (t) [\, \partial_t P_N^\varepsilon (t),\, \nabla_k (P_N^\varepsilon(t))]_{(1)}\big) +\mathcal{O}(\varepsilon^N )\;. $$
Now let us show that the above integral is independent of $\epsilon $. Indeed, since the first $N$ derivates of $t\rightarrow \big(\ii{\bf 1}-H(t)\big)^{-1}$ vanish at the endpoints then by Theorem \ref{Teo 3} one has that  $P_N^\varepsilon(0)=P_*(0)$ and $P_N^\varepsilon(T)=P_*(T)$. As a consequence of the dominated convergence theorem \cite[Corollary 5.8]{Bartle},  and following the same algebraic steps used in the proof of \cite[Theorem 1]{Shulz}, one gets 
\begin{equation*}
    \begin{split}
       & \partial_\epsilon\int_0^T {\rm d}t\; \Tp\big( P_N^\varepsilon [\partial_t{P}_N^\varepsilon, \nabla_k (P_N^\varepsilon) ]_{(1)}\big)\\
       &=\;\int_0^T {\rm d}t\;\Tp\big( P_N^\varepsilon [ \partial_\varepsilon\partial_t{P}_N^\varepsilon, \nabla_k (P_N^\varepsilon) ]_{(1)}+P_N^\varepsilon[\partial_t{P}_N^\varepsilon,\partial_\varepsilon \nabla_k( P_N^\varepsilon)]_{(1)}\big)\\
        &=\Tp\big(P_N^\varepsilon[\,\partial_\varepsilon P_N^\varepsilon,\nabla_k( P_N^\varepsilon)]_{(1)}\big)\Big|_0^T-\int_0^T {\rm d}t\;\Tp(P_N^\varepsilon[\partial_\varepsilon P_N^\varepsilon, \nabla_k( \partial_t{P}_N^\varepsilon)]_{(1)})\\
        &\;\;\; +\int_0^T {\rm d}t\;\Tp \big(P_N^\varepsilon[\partial_t{P}_N^\varepsilon, \nabla_k (\partial_\varepsilon P_N^\varepsilon)]_{(1)}\big)
        =0\;.
    \end{split}
\end{equation*}
 In both equalities above it was used that $\Tp(\partial_\varepsilon P_N^\varepsilon\partial_t{P}_N^\varepsilon\nabla_k P_N^\varepsilon)=0$, and the differentiability of the map $\varepsilon \mapsto P_N^{\varepsilon}$, which follows from Theorem \ref{Teo 3}, implies existence and equality of the mixed derivatives. Now, making $\varepsilon\rightarrow 0$  one obtains $P_N^{\varepsilon}(t)\to P_\ast(t)$ and in turn
$$  
\Delta\mathscr{P}_k\; =\;\ii\int_0^T {\rm d}t\;\Tp\big( P_* (t) [ \partial_t P_* (t),\, \nabla_k (P_* (t))]_{(1)}\big) +\mathcal{O}(\varepsilon^N )\;.
$$
This concludes the proof.
\end{proof}

\medskip

It is important to notice that the leading order term of \eqref{eq_probo} 
is invariant under diffeotopies. Consider a diffeotopy $F$ between the projection-valued paths $P_0$ and $P_1$, i.e., a smooth function $F\colon [0,T]\times [0,1]\rightarrow \mathfrak{M}^{1,1}(\A)$ such that $F(t,0)=P_0(t)$ and $F(t,1)=P_1(T)$ for all $t\in[0,T]$. 
By replacing $\varepsilon$ with the   diffeotopy parameter $s\in[0,1]$  in the proof of Theorem \ref{polarizacio}, one obtains immediately 
the equality
$$
  \Delta\mathscr{P}_k^0[P_0]\;=\;   \Delta\mathscr{P}_k^0[P_1]\;,\qquad k=1,2,\ldots,d
$$
where
$$
 \Delta\mathscr{P}_k^0[P_j]\;:=\;\int_0^T {\rm d}t\;\Tp\big( P_j (t) [\, \partial_t P_j (t),\, \nabla_k (P_j (t)) \,]_{(1)}\big)\;,\qquad j=0,1
$$
stands for the  leading order term of $\Delta\mathscr{P}_k$ with respect to the path $P_j$.

\medskip

The last important step consists in proving that  the   leading order term $\Delta\mathscr{P}_k^0$ is topologically quantized. This can be shown following the argument of  \cite{Shulz}.
\begin{corollary}
Under the assumptions of the Theorem \ref{polarizacio}, if  the deformation is cyclic, that is $H(0)=H(T)$, it holds true that 
$$
\Delta\mathscr{P}_k^0\;=\;2\pi\;{\rm Ch}({P}_*)
$$
where
$$
{\rm Ch}({P}_*)\;:=\;\frac{1}{2\pi \ii}\int_0^T{\rm d}t\;{\Tp}\big({P_*}[\ii\partial_t{P_*},\,\nabla_k({P_*})]_{(1)}\big)\;\in\;\Z
$$
is the Chern number of the differentiable map $[0,T]\ni t\mapsto P_*(t)\in \mathfrak{M}^{1,1}(\A)$. 
\end{corollary}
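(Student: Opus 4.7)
The statement splits naturally into two parts: the algebraic identification $\Delta\mathscr{P}_k^0 = 2\pi\,\mathrm{Ch}(P_*)$, which I would dispatch by direct comparison of the two formulas (using $\ii/(2\pi\ii) = 1/(2\pi)$ one sees immediately that the expression defining $\mathrm{Ch}(P_*)$ is precisely $\frac{1}{2\pi}$ times the expression defining $\Delta\mathscr{P}_k^0$), and the integrality $\mathrm{Ch}(P_*) \in \Z$, which carries the real content. For the latter the plan is to follow \cite{Shulz} and recognise $\mathrm{Ch}(P_*)$ as the Connes--Chern pairing of a cyclic $2$-cocycle with a $K_0$-class over a time-enlarged algebra.

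Concretely, I would first exploit the cyclicity $H(0)=H(T)$ together with the boundary-vanishing assumption $\partial_t^n(\ii{\bf 1}-H(t))^{-1}|_{t=0,T}=0$ carried over from Theorem \ref{polarizacio} to promote $P_*(\cdot)$ to a \emph{smooth} loop of projections. This yields a projection in the enlarged semi-finite von Neumann algebra $\tilde{\A}:=\A\otimes L^\infty(\mathbb{S}^1)$ with $\mathbb{S}^1=\R/T\Z$, equipped with the product trace $\tilde{\Tp}:=\Tp\otimes \int_0^T{\rm d}t$. On $\tilde{\A}$ the derivations $\partial_t$ (along $\mathbb{S}^1$) and $\nabla_k$ (extended trivially from $\A$) commute and are both $\tilde{\Tp}$-compatible, so they naturally assemble into the cyclic $2$-cocycle
$$
\zeta(A_0, A_1, A_2)\;:=\;\frac{1}{2\pi\ii}\,\tilde{\Tp}\bigl(A_0\,[\ii\partial_t A_1,\,\nabla_k A_2]_{(1)}\bigr)
$$
defined on a suitable smooth subalgebra of $\tilde{\A}$ containing $P_*$, with $\zeta(P_*,P_*,P_*)=\mathrm{Ch}(P_*)$. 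Integrality then follows from the standard Connes pairing between $K_0$ and even cyclic cohomology in the semi-finite setting, which identifies $\mathrm{Ch}(P_*)$ with the $\tilde{\Tp}$-index of a Fredholm-type operator built from $P_*$ and the resolvents of $\partial_t$ and $\nabla_k$; this is exactly the route taken in \cite[Section 4]{Shulz} for the discrete case, and the construction transports to our generality once the smooth subalgebra is identified.

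The main obstacle, as is usual for noncommutative Chern numbers, is to verify that $P_*(\cdot)$ actually belongs to the smooth subalgebra where $\zeta$ is finite and the index pairing is well defined. This requires $C^\infty$ regularity of $t\mapsto P_*(t)$ in the norm topology, which I would obtain by applying Remark \ref{Rem 1} to the Riesz projection representation
$$
P_*(t)\;=\;\frac{1}{2\pi\ii}\oint_{\Gamma}(z{\bf 1}-H(t))^{-1}\,{\rm d}z
$$
along a contour $\Gamma$ encircling the gapped patch $\sigma_*(t)$ (whose existence is uniform in $t$ by Hypothesis 5), together with the Sobolev regularity $P_*(t)\in\mathfrak{M}^{1,1}(\A)\cap\mathfrak{M}^{1,2}(\A)$ uniformly in $t$, which is the content of Hypothesis 6 and its preservation along the gapped evolution, already exploited in the proof of Theorem \ref{polarizacio}. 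Once these regularity statements are in place, the Connes--Chern argument of \cite{Shulz} applies essentially verbatim and yields $\mathrm{Ch}(P_*)\in\Z$, completing the proof.
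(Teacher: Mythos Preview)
Your approach is essentially the same as the paper's: pass to the time-enlarged algebra over $\mathbb{S}^1$, equip it with the two derivations $\ii\partial_t$ and $\nabla_k$ together with the product trace, recognise $\mathrm{Ch}(P_*)$ as a Connes--Chern number of the projection $P_*$ in this enlarged algebra, and then invoke integrality of such pairings. The paper is simply terser: it uses $C(\mathbb{S}^1)\otimes\A$ rather than your $\A\otimes L^\infty(\mathbb{S}^1)$, and it outsources the integrality statement directly to \cite{connes,Shulz2} without unpacking the Fredholm-index picture or the smooth-subalgebra issues you discuss.

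One small caution: you invoke $C^\infty$ regularity of $t\mapsto P_*(t)$, but under the standing hypotheses of Theorem~\ref{polarizacio} only $C^{N+2}$ regularity is available (via Lemma~\ref{lema 1} and Remark~\ref{Rem 1}). This is harmless in practice, since the cyclic $2$-cocycle and the index pairing only require finite differentiability, but you should phrase the regularity requirement accordingly rather than asking for $C^\infty$.
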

\begin{proof}
If the deformation is cyclic, we can consider  $P_*$ as a projection-valued map in the $C^*$-algebra $C(\mathbb{S}^1)\otimes \A$, where  $C(\mathbb{S}^1)$ are the continuous functions on the circle $\mathbb{S}^1\cong [0,T)$. We can endow this $C^*$-algebra with the spatial derivation $\nabla_k$, the time derivation $\ii \partial_t$, and the trace given by
$$\widehat{\Tp}(\widehat{A})\;:=\;\int_0^T{\rm d}t\;\Tp\big(A(t)\big),\hspace{1cm} \widehat{A}\in C(\mathbb{S}^1)\otimes \A\;.$$
Thus, it follows that 
$\Delta\mathscr{P}_k\;=\;2\pi\,{\rm Ch}({P}_*)+\mathcal{O}(\varepsilon^N),$ where 
${\rm Ch}({P}_*)$ is the Chern number of the element $\widehat{P}_*\in C(\mathbb{S}^1)\otimes \A$ defined by $t\mapsto P_*(t)$. It is well known that Chern numbers of projections take value in $\Z$ \cite{connes,Shulz2}.
\end{proof}

\medskip

The last result can be rephrased by saying that  up to arbitrarily small corrections in the adiabatic parameter $\varepsilon$, the orbital polarization  $\Delta\mathscr{P}_k$ is topologically quantized.

\section{Applications}\label{apll}
The mathematical framework described in the previous sections applies directly to the two most common cases, namely $\G=\Z^d$ and $\R^d$, which describe discrete (tight-binding) models and continuum systems, respectively. The case of discrete random models has been considered in detail in \cite{Shulz} and it will not be considered here. On the other hand, the treatment of the continuous random case is one of the main motivations for the writing of this work. In the following part of this section, we will present the formalism to describe the continuous random system and we will show that all the 
{\bf Hypothesis 1 - 6} listed in 
 Section \ref{sec:hip} are satisfied for such models.

\subsection*{Continuous models in disordered media}\label{sec: 4,1}
Let us focus  on \emph{ergodic magnetic media} \cite[Section 4]{Bou}, {i.e.},  non-interacting systems of charge particles submitted to a  constant magnetic field $\mathtt{B}$, and to  random potentials $A_\omega$ and $V_\omega$ (solids that can be either random, periodic or quasi-periodic), where $\omega$ runs in the ergodic probability space $(\Omega,\pr)$ of disorder with the ergodic $\R^n$-action $\tau$. Let us consider the one-particle Hilbert space $\mathfrak{h}=L^2(\R^d)$, which describes the quantum states of the system. The constant magnetic field $\mathtt{B}$ can be represented by a $d\times d$ antisymmetric matrix with entries $\{\mathtt{B}_{j,k}\}$
and the associated vector potential $A:\R^d\to\R^d$ can be chosen 
as
$$
A_j(x)\;=\;-\frac{1}{2}\sum_{k=1}^d\mathtt{B}_{j,k}x_k\;,\qquad j=1,\ldots,d\;.
$$
It turns out that
$$
\frac{\partial A_k}{\partial x_j}-\frac{\partial A_j}{\partial x_k}\;=\;\mathtt{B}_{j,k}\;,\qquad \frac{\partial A_k}{\partial x_j}+\frac{\partial A_j}{\partial x_k}\;=\;0\;,\qquad j,k=1,\ldots,d\;.
$$
On  $\mathfrak{h}$ acts the \emph{free Landau Hamiltonian}
$$H_0^A\;:=\;(
-\ii\nabla-A)^2\;,$$
and
 the family of \emph{random} magnetic Hamiltonians
$$H_\omega^A\;\equiv\; H_\omega^A(A_\omega,V_\omega)\;:=\;(
-\ii\nabla-A-A_\omega)^2+V_\omega\;,\qquad\omega\in \Omega\;.$$
In order to ensure  the self-adjointness of the  Hamiltonians $H_0^{A}$ and $H_\omega^{A}$, we assume the \emph{Leinfelder–Simader conditions} on the potentials $A$, $A_\omega$ and $V_\omega$ (see \cite{Lei} or \cite[Section 2.1]{Bou}). It turns out that 
 $H^{A}_\omega$ is essentially self-adjoint on $C_0^\infty(\R^d)$. We will denote with $\mathcal{D}_\omega:=\mathcal{D}(H_\omega^A)$
the domain of $H_\omega^A$, {i.e.}, the closure of 
$C_0^\infty(\R^d)$ with respect to the graph norm induced by $H_\omega^A$. Observe that $H_\omega^A$ meets 
 the \emph{gauge covariance} property
\begin{equation}
  H_\omega^{A+\nabla_\chi}\;=\;\expo{-\ii \chi} H_\omega^{A}\expo{\ii \chi}
\end{equation}
where $\chi:\R^d\rightarrow \R$ is considered as a multiplication operator on $\mathfrak{h}$.

\medskip

Let us consider the direct integral
$$
{\h}\;:=\;\int_\Omega^{\oplus}{\rm d}\;\pr(\omega) \;\mathfrak{h}_\omega\;\simeq\; L^2(\Omega)\otimes L^2(\R^d)\;\simeq\; L^2(\Omega\times\R^d)\;,
$$
and the subspaces $\mathcal{D}:=L^2(\Omega)\otimes\mathcal{D}_\omega$ and $\mathcal{D}_c:=L^2(\Omega)\otimes C_0^\infty(\R^d)$.
The family of Hamiltonians ${H}^{A}:=\{H_\omega^{A}\}_{\omega\in \Omega}$ defines an operator acting on  the Hilbert space ${\h}$. It turns out that  ${H}^{A}$ is essentially self-adjoint with core $\mathcal{D}_c$ and domain $\mathcal{D}$. Moreover, it follows that the maps $\omega\mapsto f(H_\omega^A)$ are measurable for every $f\in L^\infty(\R)$ (see \cite[Section 4.1]{Bou} and references therein). In particular, the spectral projections of  ${H}^{A}$ define measurable maps and this is equivalent to say that 
${H}^{A}$ is affiliated to 
$\Ran({\h})$.

\medskip

Let us introduce the vector-valued   operators $G:=-\ii\nabla+A$. It turns out that every component of $G$ commutes with $H^A_0$.
For every $y$ we consider the unitary operator $T_y:=\expo{-\ii y\cdot G }$ which acts on $\varphi\in \mathfrak{h}$  as
\begin{equation}\label{eq:T}
(T_y\varphi)(x)\;=\;\Theta^\mathtt{B}(y,x)\varphi(x-y)\;=\;\Theta^\mathtt{B}(y,x-y)\varphi(x-y)
\end{equation}
where 
$$
\Theta^\mathtt{B}(y,x)\;:=\expo{\frac{\ii}{2}y\cdot\mathtt{B} \cdot x} \;=\;\expo{\frac{\ii}{2}\sum_{j,k=1}^d\mathtt{B}_{j,k}y_jx_k}\;. 
$$
It follows that $T_yH^A_0T_y^*=H^A_0$ for every $y\in\R^d$. Furthermore, one can check that the map $\Theta^\mathtt{B}:\R^d\times\R^d\to\mathbb{U}(1)$
is a \emph{twisting group $2$-cocycle} according to \cite[Definition 4.1.2]{De Nittis}.
We assume that the potentials $A_\omega$ and $V_\omega$ are covariant random variables, {i.e.}, they meet
 $$
 V_\omega(x-y)\;=\;V_{\tau_y(\omega)}(x)\;,\qquad 
 A_\omega(x-y)\;=\;A_{\tau_y(\omega)}(x)$$ 
for
$\pr$-almost every $\omega\in\Omega$  and Lebesgue-almost every $x\in\R^d$. Then, one obtains the \emph{covariance} relations
$$
T_y H^A_\omega T_y^*=H^A_{\tau_y(\omega)}\;.
$$
If one defines the unitary ${U}_y\in{\B({\h})}$ as
 \begin{equation}\label{eqref_cov}
\big({U}_y\ {\psi}\big)_{\tau_y(\omega)}(x)\;:=\;\Theta^\mathtt{B}(y,x)\;\psi_\omega(x-y)\;,
\end{equation}
 where ${\psi}:=\{\psi_\omega\}_{\omega\in\Omega}$ is any element of ${\h}$ and on the left-hand side the symbol $(\cdot)_{\tau_y(\omega)}$  means the value of the vector ${U}_y{\psi}$ on the fiber of $\h$ at $\tau_y(\omega)$, then one gets the   \emph{invariance} relations
 $$
 {U}_yH^A{U}_y^*\;=\;H^A\;,\qquad \forall y\in\R^d\;.
 $$
 Moreover, one has that the spectral projections of  $H^A$ commute with the $ {U}_y$, and in turn $H^A$ results affiliated with the 
 von Neumann algebra 
$$\A\:=\; {\rm Span}_{\R^d}\{ {U}_y\}'\;\cap\; \Ran( {\h})\;.
$$
Ultimately $H^A\in \Aff(\A)$ according to {\bf Hypothesis 1}.

\medskip

{\bf Hypothesis 2} and ${\bf 3}$ are verified if the $\{ X_1,X_2,...,X_d\}$ are the usual position operators which act constantly  on the fibers of ${\h}$, namely
$(X_j\psi)_\omega(x):=x_j\psi_\omega(x)$ for every $\{\psi_\omega\}_{\omega\in\Omega}\in {\h}$. Observe that the
localization domain
$\mathcal{D}_c:=L^2(\Omega)\otimes C_0^\infty(\R^d)$
is also a core for $H^A$, therefore
$\mathcal{D}_c(H^A):=\mathcal{D}_c\cap \mathcal{D}(H^A)=\mathcal{D}_c$. Finally, the components of the current are $J_k:= \{J_{k,\omega}\}_{\omega\in \Omega}$ with 
$$
J_{k,\omega}\;:=\;2(-\ii\partial_k-A_k-A_{\omega,k})\;,\qquad k=1,\ldots,d\;.
$$

\medskip

The effects of the external deformation on the system are modeled by a sufficiently regular function $w:[0,T]\to\R$ with the boundary  conditions $w(0)=0=w(T)$, which enters in the definition of the 
 time-dependent Hamiltonian ${H}^{A}(t):=\{H_\omega^{A}(t)\}_{\omega\in \Omega}$ defined by
$$H_\omega^A(t)\;:=\;H_\omega^A+w(t)\;W_\omega\;,
$$
where $W:=\{W_\omega\}_{\omega\in \Omega}\in \Ran({\h})$ is a bounded random potential. In view of the Kato-Rellich theorem \cite[Theorem X.12]{reed-simon-II}
one has that $\mathcal{D}(H^A(t))=\mathcal{D}(H^A)$ for every $t\in [0,T]$. Moreover, it is straightforward to check $J_k(t)=J_k$ for every $k=1,\ldots,d$, namely the time-dependent current  equals the stationary current.
Let us assume that there is a \emph{Fermi energy} $\ef\in \C\setminus \sigma(H^A)$ inside the resolvent set of $H^A$. If $\|w\|_\infty\ll 1$ is sufficiently small the gap around $\ef$ doesn't closed during the time-dependent perturbation and one gets that $\ef\in \C\setminus \sigma(H^A(t))$ for every $t\in [0,T]$. This  is in particular a gap condition
stronger than that assumed in {\bf Hypothesis 5}, which is therefore automatically satisfied. In particular the relevant spectral patch
can be chosen as $\sigma_*(t):=(-\infty, \ef]\cap\sigma(H^A(t))$.
In order to complete  the check of the validity of
{\bf Hypothesis 4} we need to prove that  there exists the 
  \emph{unitary propagator}
$[0,T]^2\ni(s,t)\mapsto U^A(s,t)\in \A$ associated to $H^A(t)$. For that, it is sufficient to show that the conditions listed in  
\cite[Theorem X.70]{reed-simon-II} (see also \cite[Section XIV.4]{Y})
are satisfied. The main object is the operator
$$
\begin{aligned}
C(t,s)\;:&=\;\left(H^A(t)-H^A(s)\right) \frac{1}{H^A(s)-\xi{\bf 1}}\\
&=\;
\left(w(t)-w(s)\right) W\frac{1}{H^A(s)-\xi{\bf 1}}\;.
\end{aligned}
$$
If one assumes that $w\in C^1([0,T])$, then $C(t,s)$ automatically fulfills all the conditions for the construction of the unitary propagator.

\medskip

Finally, the relevant initial equilibrium state for $H^A$ is given by the spectral projection of  $H^A$ on the {Fermi energy} $\rho_0:=\chi_{(-\infty,\ef]}(H^A)$. Let us observe that, in view of the gap condition, the 
step function $\chi_{(-\infty,\ef]}$ can be replaced by a smooth and compactly supported function. Therefore, from  \cite[Proposition 4.2]{Bou} one has that also  	{\bf Hypothesis 6} is verified.

  \subsection*{Continuous periodic models}
The case of a continuous periodic  model has been rigorously studied in 
\cite{Pan}. However, it represents a special case of the model described in Section \ref{sec: 4,1} when the {ergodic topological dynamical system} $(\T^d, \R^d,\tau, \mu)$ is given by a $d$-dimensional torus $\T^d:=\R^d/\Gamma$, with $\Gamma\simeq\Z^d$ a lattice, and its normalized Haar measure $\mu$. Evidently, the action of $\R^d$ on $\T^d$ is given by translations and the resulting dynamical system is minimal, which means that the orbit of any point $\omega\in \T^d$ under the action of $\R^d$ is dense. 

\medskip

Let us fix the reference point $\omega_0=[0]$.
In view of the covariance conditions one gets
$$
 V_{\omega_0}(x-\gamma)\;=\;V_{\tau_\gamma(\omega_0)}(x)\;=\; V_{\omega_0}(x)\;,\qquad \forall\; \gamma\in\Gamma
 $$ 
since $\tau_\gamma(\omega_0)=[0+\gamma]=[0]$.
Moreover, this is independent of the election of the  reference point $\omega_0$. Therefore it turns out that  $V_{\omega_0}$, and similarly 
$A_{\omega_0}$ are $\Gamma$-periodic potentials which will be denoted simply by $V_{\Gamma}$ and 
$A_{\Gamma}$, respectively. Note also that for any $\omega\in \mathbb{T}^d$ and $A\in \A$ it holds true that
$$T_y A_{\omega_0} T_y^*\;=\;A_{\tau_y(\omega_0)}\;=\;A_\omega\;,$$
where $\omega=[y].$ Therefore, if one factor the action of $\R^d$ as $\R^d=\T^d\times\Gamma$ one can decompose the algebra $\A$ as follows
$$
\A\;=\;\int_{\T^d}^{\oplus}{\rm d}\;\mu(\omega) \;\mathfrak{A}_\Gamma
$$
where 
$$
\begin{aligned}
\mathfrak{A}_\Gamma\;:&=\;\{A\in\B(L^2(\R^d))\;|\; [T_\gamma,A]=0\;,\;\;\forall\; \gamma\in\Gamma\}\\
&=\;{\rm Span}_{\Gamma}\{ {T}_\gamma\}'
\end{aligned}
$$
is the von Neumann algebra of the bounded operators on the Hilbert space $L^2(\R^d)$ which are invariant under the action of the translations $T_\gamma$ defined by \eqref{eq:T}. Thus, there is a $\ast$-isomorphism of von Neumann algebras $\A\simeq \mathfrak{A}_\Gamma$ given by the identification $\A\ni A\mapsto A_{\omega_0}\in \mathfrak{A}_\Gamma$. Hence, in the case of continuous periodic models, it is sufficient to work with the algebra $\mathfrak{A}_\Gamma$ defined on the Hilbert space $L^2(\R^d)$.

\medskip

In the case that the algebra ${\rm Span}_{\Gamma}\{ {T}_\gamma\}$ contains a commutative $C^*$-subalgebra $\mathfrak{I}_\Gamma$ (rational magnetic flux), then the von Neumann’s spectral Theorem \cite[Part II, Chap.6, Theorem 1]{Dix1}, provides a (new) direct integral decomposition  
\begin{equation}
   L^2(\R^d)\;:=\;\int_{\sigma({\mathfrak{I}_{\Gamma}})}^\oplus {\rm d}\nu(k)\, \h_k
\end{equation}
where $\nu$ is a basic spectral measure and $\sigma(\mathfrak{I}_\Gamma)$ is the Gelfand spectrum of $\mathfrak{I}_\Gamma$. Moreover, there is a unitary map $\mathcal{F}$, called the Bloch-Floquet transform,  such that $\mathcal{F}\mathfrak{A}_\Gamma\mathcal{F}^{-1}$ is contained in the bounded decomposable operators over the direct integral, that is,
$$\mathcal{F}A\mathcal{F}^{-1}\;=\;\int_{\sigma(\mathfrak{I}_\Gamma)}^\oplus {\rm d}\nu(k)\, A(k)\;,\qquad A\in \mathfrak{A}_\Gamma,$$
where $A(k)\in \B(\h_k)$.
Note also that the trace  per unite of volume $\mathscr{T}$ on $\mathfrak{A}_\Gamma$ is given by
\begin{equation}\label{traza}
    \mathscr{T}(A)\;=\;\frac{1}{\mu\big(\sigma(\mathfrak{I}_\Gamma)\big)}\int_{\sigma(\mathfrak{I}_\Gamma)}^\oplus {\rm d}\nu(k)\, {\rm Tr}_{\h_k}\big(A(k)\big)\;.
\end{equation}

\section{Appendix: Adiabatic theorem}\label{append}
The aim of   this section is to extend the \emph{adiabatic Theorem} proved in \cite[Appendix A]{Shulz} to our setting.

\medskip
The first result concerns the regularity of the spectral projections on the gap of $H(t)$.
\begin{lemma}\label{lema 1}
 Suppose that the map $[0,T]\ni t\mapsto H(t)\in \Aff(\A)$ is $N$-differentiable in the uniform sense and that the Hypothesis 1, 2 and 5 hold. Then, the spectral projection map $P_*(t)=\chi_{\sigma_*(t)}(H(t))$ fulfills  $P_*\in C^N([0,T],\A)$.
\end{lemma}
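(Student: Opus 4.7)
The plan is to prove the claim by representing $P_*(t)$ through the Riesz functional calculus as a contour integral of the resolvent, and then differentiating under the integral using Remark \ref{Rem 1}.

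First I would exploit Hypothesis 5 and the continuity of $f_\pm$ to build, for every $t_0\in[0,T]$, a neighborhood $U_{t_0}\subset[0,T]$ and a single positively oriented contour $\Gamma_{t_0}\subset\C$ which encircles $\sigma_*(t)$ and separates it from $\sigma(H(t))\setminus\sigma_*(t)$ for all $t\in U_{t_0}$. Concretely I would pick $\Gamma_{t_0}$ to surround $G(t_0)=[f_-(t_0),f_+(t_0)]$ at distance smaller than $g/2$ from it; continuity of $f_\pm$ then guarantees, for $t$ close enough to $t_0$, that $G(t)\supset\sigma_*(t)$ stays inside $\Gamma_{t_0}$ while $\sigma(H(t))\setminus\sigma_*(t)$ stays outside with distance at least $g/3$. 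Compactness of $[0,T]$ produces a finite subcover, so it suffices to establish $N$-differentiability locally on each $U_{t_0}$.

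On such a $U_{t_0}$ the Riesz formula gives
$$
P_*(t)\;=\;\frac{1}{2\pi\ii}\oint_{\Gamma_{t_0}}{\rm d}z\;\big(z{\bf 1}-H(t)\big)^{-1},\qquad t\in U_{t_0}.
$$
By Remark \ref{Rem 1}, for every $z\in\Gamma_{t_0}$ the map $t\mapsto (z{\bf 1}-H(t))^{-1}$ lies in $C^N(U_{t_0},\A)$, with derivatives expressible through $F(z,t)$ and the derivatives of $(\ii{\bf 1}-H(t))^{-1}$. The next step is to check that these derivatives are jointly norm-continuous in $(z,t)$ on $\Gamma_{t_0}\times U_{t_0}$, and in particular uniformly bounded there: this follows because $\Gamma_{t_0}$ is a compact subset of the resolvent set for every $t\in U_{t_0}$ (by construction), so $F(z,\cdot)$ and its $t$-derivatives depend continuously on the parameter $z$ as well.

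Uniform convergence of the difference quotients on the compact contour then lets me differentiate under the integral $N$ times, obtaining
$$
\partial_t^n P_*(t)\;=\;\frac{1}{2\pi\ii}\oint_{\Gamma_{t_0}}{\rm d}z\;\partial_t^n\big(z{\bf 1}-H(t)\big)^{-1},\qquad 0\le n\le N,
$$
with norm-continuous dependence on $t\in U_{t_0}$. Hence $P_*\in C^N(U_{t_0},\A)$, and patching over the finite subcover yields $P_*\in C^N([0,T],\A)$. The main obstacle is precisely the construction of a \emph{single} contour valid on a whole neighborhood of each $t_0$ while maintaining uniform resolvent estimates; this is what the uniform spectral gap $g>0$ combined with the continuity of $f_\pm$ and the compactness of $\Gamma_{t_0}$ deliver.
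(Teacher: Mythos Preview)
Your proposal is correct and follows essentially the same approach as the paper: represent $P_*(t)$ via the Riesz contour integral, use the gap condition and the continuity of $f_\pm$ to fix a contour valid for all $t$ near $t_0$, and invoke Remark \ref{Rem 1} to differentiate the resolvent under the integral. The paper's proof is more terse (it phrases the local contour argument as ``$\gamma(t+h)$ is homotopic to $\gamma(t)$ in the resolvent set of $H(t+h)$'') but the content is the same, and your explicit discussion of the uniform bounds needed to justify differentiating under the integral is a welcome elaboration.
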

 \begin{proof}
 Let $\gamma(t)\subset \C$ be a closed curve in the resolvent set of $H(t)$ surrounding $\sigma_*(t)$ in the positive sense with 
 $$
 {\rm dist}\big(\gamma (t), \sigma (H(t))\setminus\sigma_*(t)\big)\;\leqslant \;\frac{g}{2}\;,
 $$ 
 where $g$ is defined in  Hypotheses 5. Then
 $$
 P_*(t)\;=\;\frac{1}{{\rm i}2\pi}\oint_{\gamma(t)}{\rm d}z\; \big(z{\bf 1}-H(t)\big)^{-1}\;.
 $$
  Since $f_\pm(t)$ are continuous functions, one has that $\gamma(t+h)$ is homotopic to $\gamma(t)$ in the resolvent set of $H(t+h)$ for $|h|$ small enough, and hence
$$
\begin{aligned}
P_*(t+h)\;&=\;\frac{1}{{\rm i}2\pi}\oint_{\gamma(t+h)}{\rm d}z\; \big(z{\bf 1}-H(t+h)\big)^{-1}\\&=\;\frac{1}{{\rm i}2\pi}\oint_{\gamma(t)}{\rm d}z\; \big(z{\bf 1}-H(t+h)\big)^{-1}\;.
\end{aligned}
$$
Since $\partial_t^n(z{\bf 1}-H(t))^{-1}\in \A$ for all $n\leq N$, then one deduce with an induction on $n$  that
$$
\partial_t^nP_*(t)
\;=\;\frac{1}{{\rm i}2\pi}\oint_{\gamma(t)}{\rm d}z\;  \partial_t^n\big(z{\bf 1}-H(t)\big)^{-1}\;\in\; \A\;.$$
 This concludes the proof.
 \end{proof}

\medskip
The next two results concern the existence of the \emph{superadiabatic projections} and are adaptions of \cite[Proposition 7 and Theorem 9]{Shulz}.

 \begin{proposition}\label{7}
Under the assumptions of the Lemma \ref{lema 1}, there exist  unique maps $P_n\in C^{N+2-n}([0,T], \A)$,  with $1\leqslant n\leqslant N$, such that the functions
 $$
 \widetilde{P}_m^\varepsilon(t)\;=\;\sum_{n=0}^m\varepsilon^nP_n(t)$$
for $0\leqslant m\leqslant N$ and $P_0(t)=P_*(t)=\chi_{\sigma_*(t)}(H(t))$, satisfies
\begin{equation}\label{55}
    \big(\widetilde{P}_m^\varepsilon\big)^2\;=\; \widetilde{P}_m^\varepsilon+\varepsilon^{m+1}G_{m+1}+\mathcal{O}(\varepsilon^{m+2})
\end{equation}
with $G_{m+1}:=\sum_{n=1}^mP_nP_{m+1-n}$ and \begin{equation}\label{56}
    {\rm i}\varepsilon \partial_t\widetilde{P}_m^\varepsilon(t)\;-\;[H(t), \widetilde{P}_m^\varepsilon(t)]_{\dag}\;=\;\ii\varepsilon^{m+1}\partial_t{P}_m(t)\;.
\end{equation}
Furthermore, if $\partial_t^n\big(\ii{\bf 1}-H(t)\big)^{-1}|_{t=t_0}=0$ for some $t_0\in [0,T]$ and all $n\leqslant N,$ then $P_n(t_0)=0$ for all $1\leqslant n\leqslant N.$
 \end{proposition}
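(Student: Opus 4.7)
My plan is to proceed by induction on $n$, constructing each correction $P_n$ by splitting it into its diagonal and off-diagonal components with respect to the block decomposition induced by $P_0:=P_*$: write $P_n=P_n^{\mathrm{D}}+P_n^{\mathrm{OD}}$, where $P_n^{\mathrm{D}}:=P_*P_nP_*+(\mathbf{1}-P_*)P_n(\mathbf{1}-P_*)$. The idea is that (55) uniquely fixes the diagonal part in terms of previously constructed data, while (56) uniquely fixes the off-diagonal part through a Sylvester-type equation that can be solved via a Riesz integral.

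First I would expand (55) in powers of $\varepsilon$. Collecting the $\varepsilon^n$ coefficient for $1\leq n\leq m$ yields the algebraic identity
$P_0P_n+P_nP_0-P_n=-G_n$
where $G_n=\sum_{k=1}^{n-1}P_kP_{n-k}$ involves only previously built objects. Sandwiching by $P_0$ and by $(\mathbf{1}-P_0)$ on both sides forces
$P_0P_nP_0=-P_0G_nP_0$ and $(\mathbf{1}-P_0)P_n(\mathbf{1}-P_0)=(\mathbf{1}-P_0)G_n(\mathbf{1}-P_0)$,
which uniquely determines $P_n^{\mathrm{D}}$. Expanding (56) in $\varepsilon$ and demanding that the residual be exactly $\ii\varepsilon^{m+1}\partial_tP_m$ forces the recursion $[H,P_n]=\ii\partial_tP_{n-1}$ at each order $1\leq n\leq m$. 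Because $[H,P_*]=0$, the operator $H$ is block-diagonal with respect to the $P_*$-decomposition, so this commutator equation splits into a diagonal and an off-diagonal part. The off-diagonal component
$[H,P_n^{\mathrm{OD}}]=\ii(\partial_tP_{n-1})^{\mathrm{OD}}$
is a Sylvester equation whose unique solution, thanks to the uniform gap of Hypothesis 5, can be expressed via a Riesz contour integral of the form
\begin{equation*}
P_n^{\mathrm{OD}}(t)\;=\;\frac{1}{2\pi\ii}\oint_{\gamma(t)}{\rm d}z\;\bigl(z\mathbf{1}-H(t)\bigr)^{-1}\,\ii\bigl(\partial_tP_{n-1}(t)\bigr)^{\mathrm{OD}}\,\bigl(z\mathbf{1}-H(t)\bigr)^{-1},
\end{equation*}
where $\gamma(t)$ is the contour from Lemma \ref{lema 1}. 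Remark \ref{Rem 1} guarantees that this integral takes values in $\A$ and inherits the regularity of the resolvent.

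The central consistency to verify is that the diagonal component $[H,P_n^{\mathrm{D}}]=\ii(\partial_tP_{n-1})^{\mathrm{D}}$ is automatically satisfied by the choice of $P_n^{\mathrm{D}}$ dictated by (55). I would establish this by differentiating in $t$ the idempotency identity at order $n-1$, $\sum_{k=0}^{n-1}P_kP_{n-1-k}=P_{n-1}$, and manipulating the resulting expression using $[H,P_0]=0$ together with the commutator recursion at previous orders. This algebraic closure is the standard compatibility step in Nenciu's superadiabatic scheme adapted to the non-commutative setting, and it is the main technical obstacle of the proof: one must keep careful bookkeeping of the blocks $P_0\cdot P_0$ and $(\mathbf{1}-P_0)\cdot(\mathbf{1}-P_0)$ at every order to ensure the recursion closes consistently. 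Regularity then follows inductively: since $P_n^{\mathrm{OD}}$ loses one derivative compared with $P_{n-1}$ through $\partial_t$, and $P_n^{\mathrm{D}}$ inherits the worst regularity among $P_1,\ldots,P_{n-1}$, one obtains $P_n\in C^{N+2-n}([0,T],\A)$ provided $H$ is differentiable enough.

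The final assertion is immediate from the explicit recursive formulas. If $\partial_t^n(\ii\mathbf{1}-H(t))^{-1}|_{t=t_0}=0$ for all $1\leq n\leq N$, then by Remark \ref{Rem 1} and Lemma \ref{lema 1} all the derivatives of $P_*$ (and of the resolvent along $\gamma(t)$) vanish at $t_0$. In particular $\partial_tP_0(t_0)=0$, which through the Riesz formula forces $P_1^{\mathrm{OD}}(t_0)=0$, and the projection condition gives $G_1(t_0)=0$, hence $P_1^{\mathrm{D}}(t_0)=0$, so $P_1(t_0)=0$. Iterating this vanishing cascade through the induction yields $P_n(t_0)=0$ for all $1\leq n\leq N$, as claimed.
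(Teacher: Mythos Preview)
Your approach is essentially the same as the paper's: both construct $P_n$ by induction, fixing the diagonal part as $P_*^\perp G_n P_*^\perp - P_* G_n P_*$ from the almost-idempotency relation \eqref{55} and the off-diagonal part via a Riesz contour integral solving the commutator recursion coming from \eqref{56}, and both deduce the vanishing at $t_0$ by the same cascade through Remark~\ref{Rem 1} and Lemma~\ref{lema 1}. One small correction to your explicit formula: the integrand $i(\partial_t P_{n-1})^{\mathrm{OD}}$ does not literally solve $[H,P_n^{\mathrm{OD}}]=i(\partial_t P_{n-1})^{\mathrm{OD}}$, because $\tfrac{1}{2\pi i}\oint_\gamma(z\mathbf{1}-H)^{-1}B(z\mathbf{1}-H)^{-1}\,{\rm d}z$ satisfies $[H,\,\cdot\,]=[P_*,B]$ rather than $[H,\,\cdot\,]=B$; the paper accordingly places $[\partial_t P_{n-1},P_*]_\dag$ in the integrand.
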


 \begin{proof}
This result can be obtained by using induction in $m$. For  $m=0$, with $\widetilde{P}_0^\varepsilon(t):=P_*(t)$ the instantaneous spectral projection of $H(t)$, it follows that
 $$
 \big(\widetilde{P}_0^\varepsilon\big)^2\;=\;\widetilde{P}_0^\varepsilon\;,\qquad{\rm i}\varepsilon\partial_t\widetilde{P}_0^\varepsilon-[H(t),\widetilde{P}_0^\varepsilon]_{\dag}\;=\;\ii\varepsilon \partial_t{P}_\ast(t)\;=\;\mathcal{O}(\varepsilon)\;.$$ 
 Assume now that $\eqref{55}$ and $\eqref{56}$ holds for $P_j$ with $j=0,\ldots,m$. Thus, if we define $P_{m+1}$ as 
\begin{equation*}\label{64}
\begin{aligned}
    P_{m+1}\;:=&\;P_*^\bot G_{m+1}P_*^\bot-P_* G_{m+1}P_*\\
    &+ \frac{1}{{\rm i}2\pi }\oint_{\gamma(t)} {\rm d}z\;(z{\bf 1}-H)^{-1}[\partial_t{P}_m,P_*]_{\dag}(z{\bf 1}-H)^{-1}\;,
\end{aligned}
\end{equation*}
with  $\gamma(t)$ any curve encircling $\sigma_*(t)$ once in the positive sense,
 one can show that $\eqref{55}$ and $\eqref{56}$ hold for $P_{m+1}$
 just following the same  steps in  \cite[Proposition 7]{Shulz}.
  Moreover, since $\A$ is closed under holomorphic functional calculus, one gets $P_{m+1}\in \A$. Finally, if $\partial^n_t\big(\ii{\bf 1}-H(t)\big)^{-1}|_{t=t_0}=0$ then it is also true that $\partial^n_t\big(z{\bf 1}-H(t)\big)^{-1}|_{t=t_0}=0$ for each $z$ in the resolvent of $H(t)$ for every $t$ in $[0,T]$ (see Remark \ref{Rem 1}). Thus, $\dot{P}_*(t_0)=0$ and by the construction of $P_{m+1}$, it follows also that $P_1(t_0)=0$. Using induction again we conclude the last statement.
\end{proof}

\medskip
In order to simplify the notation, we introduce the following norm $$\| A(t)\|_{\mathcal{S},k}\;:=\;\|A(t)\|+\|\partial_tA(t)\|+\|\nabla_k A(t)\|\qquad k=1,\dots,d
$$
for any differentiable path $t\mapsto A(t)$ in $C^{1}([0,T], \A)$.

\begin{theorem}\label{Teo 3}
Let the map $[0,T]\ni t\mapsto H(t)\in \Aff(\A)$ be $N$-differentiable in the uniform sense for some $N\in \N$ and assume the hypothesis  of Lemma \ref{lema 1}. Then, there are constants $\varepsilon_N>0$, $c_N<\infty$ and orthogonal projections $P_N^\varepsilon(t)\in \mathfrak{M}^{1,1}(\A)$ such that the map $(0,\varepsilon_N)\ni\varepsilon \rightarrow P_N^\varepsilon(\cdot)\in C^2([0,T],\mathfrak{M}^{1,1}(\A))$, and the following properties hold uniformly in $t$: 
\begin{equation}\label{53}
    \|P_N^\varepsilon(t)-P_*(t)\|_{\mathcal{S},k}\;<\;c_N\varepsilon\;,
\end{equation}
\begin{equation}\label{54}
   \big \| {\rm i}\varepsilon \partial_tP_N^\varepsilon-\big[\,H(t), P_N^\varepsilon\big]_{\dag}\big\|_{\mathcal{S},k}\;<\;c_N\varepsilon^{N+1}\;.
\end{equation}
Moreover, if $\partial^n_t\big(\ii{\bf 1}-H(t)\big)^{-1}|_{t=t_0}=0$ for some $t_0\in [0, T]$, then $P_N^\varepsilon(t_0)=P(t_0).$
\end{theorem}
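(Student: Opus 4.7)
The plan is to upgrade the formal almost-projection $\widetilde{P}_N^\varepsilon(t) = \sum_{n=0}^N \varepsilon^n P_n(t)$ provided by Proposition \ref{7} into a true orthogonal projection via the holomorphic functional calculus. Each $P_n$ can be arranged to be self-adjoint (by replacing with its symmetrization in the recursive construction, or by observing that the three summands in the formula for $P_{m+1}$ are already self-adjoint), hence $\widetilde{P}_N^\varepsilon$ is self-adjoint and satisfies $(\widetilde{P}_N^\varepsilon)^2 - \widetilde{P}_N^\varepsilon = O(\varepsilon^{N+1})$ in $\A$ by \eqref{55}. For $\varepsilon < \varepsilon_N$ small enough the spectrum of $\widetilde{P}_N^\varepsilon(t)$ is contained in two small discs around $0$ and $1$, uniformly in $t$, so I would set
\[
P_N^\varepsilon(t)\;:=\;\frac{1}{2\pi\ii}\oint_{|z-1|=1/2}{\rm d}z\;\big(z{\bf 1}-\widetilde{P}_N^\varepsilon(t)\big)^{-1}.
\]
Since $\A$ is a von Neumann algebra (hence closed under bounded Borel functional calculus), $P_N^\varepsilon(t)\in\A$ is an orthogonal projection, and analyticity of the Cauchy integral in its argument gives the smooth dependence on $\varepsilon$.

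The three asserted estimates then reduce to resolvent manipulations. For \eqref{53}, the algebraic identity
\[
(z-\widetilde{P}_N^\varepsilon)^{-1}\;=\;\tfrac{1}{z-1}\widetilde{P}_N^\varepsilon + \tfrac{1}{z}({\bf 1}-\widetilde{P}_N^\varepsilon) + \tfrac{1}{z(z-1)}\big((\widetilde{P}_N^\varepsilon)^2-\widetilde{P}_N^\varepsilon\big)(z-\widetilde{P}_N^\varepsilon)^{-1}
\]
together with the contour integration yields $P_N^\varepsilon - \widetilde{P}_N^\varepsilon = O(\varepsilon^{N+1})$ in the operator norm, and combining with $\widetilde{P}_N^\varepsilon-P_\ast=O(\varepsilon)$ gives the operator norm part of \eqref{53}. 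Applying $\partial_t$ and $\nabla_k$ inside the contour (using that $\nabla_k$ is a closed derivation and the integrand is bounded in the associated Sobolev norm) upgrades the estimate to the full $\|\cdot\|_{\mathcal{S},k}$ norm. For \eqref{54}, the derivation rule $[H,(z-\widetilde{P}_N^\varepsilon)^{-1}]_{\dag}=(z-\widetilde{P}_N^\varepsilon)^{-1}[H,\widetilde{P}_N^\varepsilon]_{\dag}(z-\widetilde{P}_N^\varepsilon)^{-1}$ gives
\[
\ii\varepsilon\partial_t P_N^\varepsilon - [H,P_N^\varepsilon]_{\dag}\;=\;\frac{1}{2\pi\ii}\oint{\rm d}z\;(z-\widetilde{P}_N^\varepsilon)^{-1}\big(\ii\varepsilon\partial_t\widetilde{P}_N^\varepsilon-[H,\widetilde{P}_N^\varepsilon]_{\dag}\big)(z-\widetilde{P}_N^\varepsilon)^{-1},
\]
and \eqref{56} bounds the middle factor by $\varepsilon^{N+1}\|\partial_t P_N\|_{\mathcal{S},k}$, yielding the claim. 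Finally, the boundary condition is immediate: under the vanishing hypothesis on $\partial_t^n(\ii{\bf 1}-H)^{-1}$ at $t_0$, Proposition \ref{7} guarantees $P_n(t_0)=0$ for $1\le n\le N$, so $\widetilde{P}_N^\varepsilon(t_0)=P_\ast(t_0)$ is already a true projection and the Cauchy integral returns $P_\ast(t_0)$.

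The main obstacle I expect is tracking the $\mathfrak{M}^{1,1}(\A)$-regularity through the construction, rather than merely the operator-norm estimates. The delicate point is that $\nabla_k$ is unbounded and only densely defined, so applying it under the contour integral requires knowing that each $\widetilde{P}_N^\varepsilon(t)$ lies in the Sobolev space together with enough time derivatives to make the map $C^2([0,T],\mathfrak{M}^{1,1}(\A))$. Unpacking the recursion for $P_{m+1}$ in Proposition \ref{7}, each $P_n(t)$ is a finite polynomial expression in $P_\ast(t)$, its time derivatives, and resolvents of $H(t)$ integrated against contours; since the $p$-regularity in Hypothesis 6 together with Remark \ref{Rem 1} propagates to the resolvents, one obtains the required Sobolev regularity inductively. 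A secondary technical point is that the commutator $[H,P_N^\varepsilon-\widetilde{P}_N^\varepsilon]_{\dag}$ would be hard to estimate directly because $H$ is unbounded, but this is neatly bypassed by the resolvent representation above, in which $H$ enters only through the already controlled quantity $[H,\widetilde{P}_N^\varepsilon]_{\dag}$.
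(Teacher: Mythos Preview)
Your proposal is correct and follows essentially the same route as the paper: define $P_N^\varepsilon$ as the Riesz projection of the almost-projection $\widetilde{P}_N^\varepsilon$ from Proposition~\ref{7} via the contour $|z-1|=\tfrac12$, then derive \eqref{53} and \eqref{54} by resolvent identities and \eqref{55}--\eqref{56}, with the boundary statement coming directly from the last clause of Proposition~\ref{7}. The only cosmetic difference is that for \eqref{53} the paper uses the second resolvent identity $P_N^\varepsilon-P_\ast=\tfrac{1}{2\pi\ii}\oint(z-\widetilde{P}_N^\varepsilon)^{-1}(\widetilde{P}_N^\varepsilon-P_\ast)(z-P_\ast)^{-1}\,{\rm d}z$ directly, whereas you pass through $P_N^\varepsilon-\widetilde{P}_N^\varepsilon=O(\varepsilon^{N+1})$ first; and for the $\mathfrak{M}^{1,1}$-regularity the paper simply invokes an adaptation of \cite[Proposition~4.2]{Bou} rather than unpacking the recursion as you do.
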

\begin{proof}We know by (\ref{55}) that there is a constant $c_N$ such that 
$$\| (\PE)^2-\PE\|
\;\leqslant\; c_N\varepsilon^{N+1}\;.
$$ 
Therefore, the spectral mapping theorem provides
$$
\begin{aligned}
\sigma(\PE)\;&\subset\; [-c_N\varepsilon^{N+1},c_N\varepsilon^{N+1}]\cup[1-c_N\varepsilon^{N+1}, 1+c_N\varepsilon^{N+1}]\\
&\subset \left[-\frac{1}{4},\frac{1}{4}\right]\;\cup\; \left[\frac{3}{4},\frac{5}{4}\right]
\end{aligned}
$$
where the latter holds for $\varepsilon<\varepsilon_N=(4c_N)^{-\frac{1}{N+1}}$. Thus, one can define for any $\varepsilon<\varepsilon_N$
$$
P_N^\varepsilon\;:=\;\frac{1}{{\rm i} 2\pi}\oint_{|z-1|=\frac{1}{2}}{\rm d}z\;(z{\bf 1}-\PE)^{-1},$$
where the integral is taken in the positive sense. It follows that $P_N^\varepsilon\in \A$. Moreover by adapting  the arguments used in \cite[Proposition 4.2]{Bou} one can show that $P_N^\varepsilon(t)\in\mathfrak{M}^{1,1}(\A)$ for every $t\in[0,T]$. Using   the fact that $\PE$ is differentiable one obtains that $\varepsilon \mapsto P_N^\varepsilon(\cdot)$ is in $ C^2([0,T], \mathfrak{M}^{1,1}(\A))$. Now one can obtain \eqref{53} by  taking the norms of
$$
P_N^\varepsilon-P_*\;=\; \frac{1}{{\rm 1} 2\pi}\oint_{|z-1|=\frac{1}{2}}{\rm d}z\;(z{\bf 1}-\PE)^{-1}(\PE-P_*)(z{\bf 1}-P_*)^{-1}\;,$$
of its time derivate $\partial_t$ and of its gradient $\nabla_k$. In the same way, we can use
\begin{equation*}
    \begin{split}
       &{\rm i}\varepsilon \partial_tP_N^\varepsilon- [H,P_N^\varepsilon]_{\dag}\\
       &=\; \frac{1}{{\rm i} 2\pi}\oint_{|z-1|=\frac{1}{2}}{\rm d}z\;\left({\rm i}\varepsilon \partial_t(z{\bf 1}-\PE)^{-1}-[H, (z{\bf 1}-\PE)^{-1}]_{\dag}\right)\\
        &=\;\frac{1}{{\rm i} 2\pi}\oint_{|z-1|=\frac{1}{2}}{\rm d}z\;(z{\bf 1}-\PE)^{-1}\left({\rm i}\varepsilon \partial_t\PE-[H, \PE]_{\dag}\right)
        (z{\bf 1}-\PE)^{-1}\\
        &=\;\frac{\varepsilon^{N+1}}{{\rm i} 2\pi}\oint_{|z-1|=\frac{1}{2}}{\rm d}z\;(z{\bf 1}-\PE)^{-1}\partial_t{P}_N(z{\bf 1}-\PE)^{-1}
    \end{split}
\end{equation*}
to show \eqref{54}. The last claim follows directly from Proposition \ref{7}.
\end{proof}

\medskip

The proof of the following result is an adaption of \cite[Corollary 5]{Shulz}.
\begin{corollary}\label{Coro 3}
Let $\rho_{sa}^\varepsilon(t)$ be the unique solution of the equation
$$ {\rm i}\varepsilon\partial_t\rho^{\varepsilon}_{sa}(t)\;=\;\ii[H(t),\rho_{sa}^\varepsilon(t)]_{\dag}\;,\qquad \rho_{sa}^\varepsilon(0)\;:=\;P_N^{\varepsilon}(0).$$
Then under the hypothesis  of Lemma \ref{lema 1} one gets that
\begin{equation*}
    \rho_{sa}^\varepsilon(t)\;=\;P_N^\varepsilon(t)+\Delta^\varepsilon(t)
\end{equation*}
with $\|\Delta^\varepsilon(t)\|_\mathcal{S}=\mathcal{O}(\varepsilon^N|t|)$.
\end{corollary}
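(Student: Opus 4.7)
The plan is to recognize $\Delta^\varepsilon(t):=\rho_{sa}^\varepsilon(t)-P_N^\varepsilon(t)$ as the solution of an inhomogeneous Liouville equation whose forcing term is the superadiabatic defect controlled by \eqref{54}, and then to bound it by Duhamel's formula together with the isometric property of the adiabatic propagator.

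First I would set $R^\varepsilon(t):=\ii\varepsilon\partial_t P_N^\varepsilon(t)-[H(t),P_N^\varepsilon(t)]_{\dag}$; by Theorem \ref{Teo 3} one has $\|R^\varepsilon(t)\|_{\mathcal{S},k}\le c_N\varepsilon^{N+1}$ uniformly in $t\in[0,T]$. Subtracting the evolution equation satisfied by $\rho_{sa}^\varepsilon$ from the identity for $P_N^\varepsilon$ yields
$$
\ii\varepsilon\partial_t\Delta^\varepsilon(t)\;=\;[H(t),\Delta^\varepsilon(t)]_{\dag}-R^\varepsilon(t),\qquad \Delta^\varepsilon(0)=0.
$$
Let $\eta^\varepsilon_{(t,s)}$ denote the rescaled adiabatic flow induced on $\A$ (and on each $\LL^p(\A)$) by the unitary propagator associated with $H(\cdot)/\varepsilon$; its existence is guaranteed by Hypothesis 4 after the time-rescaling $t\mapsto t/\varepsilon$, and it is a jointly strongly continuous family of isometries. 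Duhamel's variation-of-constants formula then gives
$$
\Delta^\varepsilon(t)\;=\;\frac{\ii}{\varepsilon}\int_0^t\eta^\varepsilon_{(t,s)}\bigl(R^\varepsilon(s)\bigr)\,ds,
$$
from which the isometry property immediately produces the operator norm bound $\|\Delta^\varepsilon(t)\|\le\frac{1}{\varepsilon}\int_0^{|t|}\|R^\varepsilon(s)\|\,ds\le c_N\varepsilon^{N}|t|$.

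For the spatial-derivative component of $\|\cdot\|_{\mathcal{S},k}$, I would differentiate under the integral and use a commutator identity relating $\nabla_k\circ\eta^\varepsilon_{(t,s)}$ to $\eta^\varepsilon_{(t,s)}\circ\nabla_k$ plus a secondary Duhamel piece involving the current $J_k=\nabla_k(H)$, whose contribution is controlled by the $H$-infinitesimal boundedness of $J_k$ from Hypothesis 3(iii) and by the $\|\nabla_k R^\varepsilon\|\le c_N\varepsilon^{N+1}$ information already encoded in $\|R^\varepsilon\|_{\mathcal{S},k}$. For the time-derivative component, I would integrate by parts in $s$ inside the Duhamel integral, exploiting that Proposition \ref{7} forces $R^\varepsilon(s)=\ii\varepsilon^{N+1}\partial_s P_N(s)+\mathcal{O}(\varepsilon^{N+2})$; this converts the outer factor $1/\varepsilon$ into a gain, so that the bound $\|\partial_t\Delta^\varepsilon(t)\|=\mathcal{O}(\varepsilon^N|t|)$ survives after combining with the equation $\partial_t\Delta^\varepsilon=(\ii\varepsilon)^{-1}([H,\Delta^\varepsilon]_{\dag}-R^\varepsilon)$.

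The main obstacle is the spatial derivative step: since $\nabla_k$ does not commute with $\eta^\varepsilon_{(t,s)}$, the natural commutator identity inserts an inner $1/\varepsilon$ factor which, if not treated carefully, kills one power of $\varepsilon^{N+1}$ and degrades the estimate to $\mathcal{O}(\varepsilon^{N-1}|t|)$. The remedy is the same integration-by-parts trick used for $\partial_t$, using that $R^\varepsilon$ is essentially a total time-derivative remainder, so that an additional factor of $\varepsilon$ can be pulled out of the inner Duhamel integral via $\partial_s\eta^\varepsilon_{(t,s)}(\cdot)=-(\ii\varepsilon)^{-1}\eta^\varepsilon_{(t,s)}([H(s),\cdot]_{\dag})+\eta^\varepsilon_{(t,s)}(\partial_s\cdot)$. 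This recovers the clean bound $\|\Delta^\varepsilon(t)\|_{\mathcal{S}}=\mathcal{O}(\varepsilon^N|t|)$.
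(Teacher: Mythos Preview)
The paper gives no proof here and simply refers to \cite[Corollary~5]{Shulz}, whose argument is Duhamel-based; your overall strategy is therefore the intended one, and the operator-norm estimate $\|\Delta^\varepsilon(t)\|\le c_N\varepsilon^N|t|$ via the isometry of $\eta^\varepsilon_{(t,s)}$ is correct.

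The gap is in the $\partial_t$- and $\nabla_k$-components of $\|\cdot\|_{\mathcal{S},k}$. Your integration-by-parts remedy does not recover the lost factor of $\varepsilon$: writing $R^\varepsilon=\varepsilon^{N+1}\partial_sQ+O(\varepsilon^{N+1})$ and integrating by parts in the Duhamel integral produces, besides a harmless boundary term, the contribution $\varepsilon^{N}\int_0^t[\partial_s\eta^\varepsilon_{(t,s)}](Q(s))\,ds$, and since $\partial_s\eta^\varepsilon_{(t,s)}(A)=-(\ii\varepsilon)^{-1}\eta^\varepsilon_{(t,s)}([H(s),A]_\dag)$ this equals $\varepsilon^{N-1}\int_0^t\eta^\varepsilon_{(t,s)}([H,Q]_\dag)\,ds=O(\varepsilon^{N-1}|t|)$, i.e.\ \emph{one power worse} than the direct bound, not better. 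The same obstruction hits the other pieces: from $\partial_t\Delta^\varepsilon=(\ii\varepsilon)^{-1}([H,\Delta^\varepsilon]_\dag-R^\varepsilon)$ you only obtain $\|\partial_t\Delta^\varepsilon\|=O(\varepsilon^{N-1}|t|)$ even when $H$ is bounded, because $\|[H,\Delta^\varepsilon]\|$ is merely $O(\varepsilon^N|t|)$; and for $\nabla_k\Delta^\varepsilon$ the inhomogeneous Liouville equation picks up the forcing $[J_k,\Delta^\varepsilon]$, so a Gronwall bound using the already-established $\|\Delta^\varepsilon\|=O(\varepsilon^N|t|)$ yields only $O(\varepsilon^{N-1}|t|^2)$. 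Thus your sketch does not deliver the stated $\|\cdot\|_{\mathcal{S}}$-estimate. The cleanest repair is an index shift: run the superadiabatic construction to order $N{+}1$ (the $N{+}2$-differentiability assumed in Theorem~\ref{polarizacio} is there precisely to allow this), so that even after losing one power in the derivative estimates the required $O(\varepsilon^N)$ survives; alternatively, look up the exact mechanism in \cite{Shulz}, which in the bounded tight-binding setting may use a device your sketch has not reproduced.
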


\section*{Acknowledgements}
GD's research is supported by the grant {Fondecyt Regular - 1190204}. DP’s research is supported by ANID-Subdirección    de Capital Humano/ Doctorado Nacional/ 2022-21220144.
GD's would like to thank the Alexander von Humboldt Foundation for supporting his stay at the University of  Erlangen-N\"urnberg during July 2022 where the large part of this work was completed. He is also grateful to Camping due barche (Scanzano, Italy) where the peaceful atmosphere of this place provided an invaluable help for the preparation of the final version of this manuscript. The authors are  indebted to M. Lein and S. Teufel for many stimulating discussions.

%
%
%

\end{document}